\newcommand{\calR}{\mathcal{R}}
\newcommand{\alp}{A}
\newcommand{\conf}[2]{\textup{C}(#1,#2)}
\newcommand{\reach}{\textsc{Reach}}
\newcommand{\average}{\textsc{Avg}}
\newcommand{\underpopulation}[2]{\mathcal{R}^{+}(#1,#2)}
\newcommand{\overpopulation}[2]{\mathcal{R}^{-}(#1,#2)}
\newcommand{\word}[1]{\tau_{#1}}
\newcommand{\wildcard}{?}
\newcommand{\numbu}[1]{[ #1 ]}
\newcommand*\calO{\mathcal{O}}
\title{Simplified Game of Life: Algorithms and Complexity}
\author{Krishnendu Chatterjee}
  {Institute of Science and Technology, Austria}
  {}{}
  {The research was partially supported by the Vienna Science and Technology Fund (WWTF) Project ICT15-003.}
\author{Rasmus Ibsen-Jensen}
  {University of Liverpool, United Kingdom}{}{}{}{}
\author{Isma\"el Jecker}
  {Institute of Science and Technology, Austria}
 {}{}
 {This project has received funding from the European Union’s Horizon 2020 research and innovation programme under the Marie Skłodowska-Curie Grant Agreement No. 754411.}{}{}
\author{Jakub Svoboda}
  {Institute of Science and Technology, Austria}
  {}{}{}{}
\authorrunning{K. Chatterjee, R. Ibsen-Jensen, I. Jecker and J. Svoboda}
\keywords{game of life, cellular automata, computational complexity, dynamical systems}
\begin{document}
\maketitle
\begin{abstract}
Game of Life is a simple and elegant model to study dynamical system over
networks.
The model consists of a graph where every vertex has one of two types, namely,
dead or alive.
A configuration is a mapping of the vertices to the types.
An update rule describes how the type of a vertex is updated given the types of
its neighbors.
In every round,
all vertices are updated synchronously, which leads to a configuration
update.
While in general, Game of Life allows a broad range of update rules,
we focus on two simple families of update rules,
namely, underpopulation and overpopulation,
that model several interesting dynamics studied in the literature.
In both settings, a dead vertex requires at least a 
desired number of live neighbors to become alive.
For underpopulation (resp., overpopulation),
a live vertex requires at least (resp. at most) 
a desired number of live neighbors to remain alive.
We study the basic computation problems, e.g., configuration reachability, for these 
two families of rules.
For underpopulation rules,
we show that these problems can be solved
in polynomial time,
whereas for overpopulation rules they are PSPACE-complete.
\end{abstract}

\section{Introduction}

Game of Life is a well-known model to study dynamics over networks.
We consider the classical model of Game of Life and study two simple
update rules for which we establish algorithms and computational complexity.
We start with a description of dynamical systems, then explain Game of Life and our simplified rules,
and finally state our main results.

\subparagraph{Dynamical systems.}
A dynamical system describes a set of rules updating the state of the system. 
The study of dynamical systems and computational
questions related to them is a core problem in computer science. 
Some classic examples of dynamical systems are the following:
(a)~a set of matrices that determine the update of the state of the dynamical system~\cite{blondel2001stability,ouaknine2019decidability};
(b)~a stochastic transition matrix that determines the state update (the classical model of
Markov chains)~\cite{Kemeny}; 
(c)~dynamical systems that allows stochastic and non-deterministic behavior (aka Markov 
decision processes)~\cite{FV97,Puterman,BaierBook};
and so on.
To study dynamics over networks the two classical models are
Game of Life~\cite{gardener1970mathematical,berlekamp2004winning} and
cellular automata~\cite{wolfram2018cellular}.

\subparagraph{Game of Life.}
Game of Life is a simple yet elegant model to study dynamical systems over networks. 
The network is modeled as a graph where every vertex has one of two types, namely, dead or alive. 
A configuration (or state of the system) is a mapping of the vertices to the types. 
An update rule describes how the type of a vertex is updated given the types of
its neighbors.
In every round, all vertices are updated synchronously, leading to a configuration
update.
In Game of Life, the update rules are deterministic, hence the configuration graph is
deterministic.
In other words, from each starting configuration, the updates lead to a finite path 
followed by a cycle in the configuration graph.
While in Game of Life the successor of a state only depends
on its number of neighbors of each type, in the more general
model of cellular automata the update rule can also
distinguish among the positions
of the neighbors.

\subparagraph{Simplified rules.} 
While the update rules in Game of Life are quite general,
in this work, we focus on two simplified rules,
namely, {\em underpopulation rules} and {\em overpopulation rules}:
\begin{itemize}
\setlength{\itemsep}{0pt}
\setlength{\parskip}{0pt}
\setlength{\parsep}{0pt}
\item {\em Underpopulation rule.} 
According to an underpopulation rule, 
a dead vertex becomes alive if it
has at least $i_0$ live neighbors,
and remains dead otherwise;
a live vertex remains alive if
it has at least $i_1$ live neighbors,
and becomes dead otherwise.
 \item {\em Overpopulation rule.} 
According to an overpopulation rule, 
a dead vertex becomes alive if it
has at least $i_0$ live neighbors,
and remains dead otherwise;
a live vertex remains alive if
it has at most $i_1$ live neighbors,
and becomes dead otherwise.
\end{itemize}
\noindent
See Section~\ref{section:Preliminaries} for the formal details of the definition, and a
detailed comparison of our setting with cellular automata,
and Conway's original Game of Life.

\subparagraph{Motivation.}
While we consider simpler rules, we study their effect on
any type of graph, contrary to cellular automata that focus on grids.
This allows us to model several complex situations studied in the literature.
For example, the underpopulation rule models the spread of ideas,
where a person adopts a new idea only if sufficiently many neighbors adopt it, or study bandwagon effects where
an item is considered useful if sufficiently many neighbors use it.
In contrast, the overpopulation rule models anti-coordination effects where the goal is to avoid
a popular pub, or model snob effects where individuals discard a fashion if too many neighbors have adopted it. 
See Section~\ref{section:motivation} for further details.

\subparagraph{Basic computational problems.}
We study two basic computational problems for the underpopulation and overpopulation rule.
The first computational problem is the {\em configuration reachability} question which asks,
given an initial configuration and a target configuration,
whether the target configuration is reachable from the initial configuration.
The second computational problem is the {\em long-run average} question,
which asks, given an initial configuration,
what is the long-run average of the number of live vertices.
Note that in the configuration graph,
any initial configuration is the source of a finite path followed by a cycle.
The long-run average question asks about the average number of live vertices in the cycle.

\subparagraph{Our contributions.}
Our main contributions are algorithms and complexity results for the two families of rules.
First, for the underpopulation rules,
we present polynomial time algorithms
for both computational problems for all graphs. 
Thus, we identify a simple update rule in Game of Life,
that can model several interesting scenarios, 
for which we present efficient algorithms.
Second, for the overpopulation rules,
we show that both computational problems are PSPACE-complete. 
Note that the PSPACE upper bound holds for general update rules for Game of Life,
hence the main contribution is the PSPACE hardness proof. 
Moreover, we show that the PSPACE hardness even holds for regular graphs with a constant degree.
Note that the difference between underpopulation and overpopulation is minimal
(one inequality reversed), 
yet we show that while efficient algorithms exist for underpopulation rules,
the computational problems for overpopulation rules are intractable.

\section{Preliminaries}\label{section:Preliminaries}

Given a finite alphabet $\alp$,
we denote by $\alp^*$
the set of finite sequences
of elements of $\alp$,
and by $\alp^{\omega}$
the set of infinite sequences
of elements of $\alp$.
The elements of $\alp^*$ and $\alp^{\omega}$
are called \emph{words} over $\alp$.
The \emph{length}
of a word $w = a_1a_2a_3 \ldots \in A^* \cup A^\omega$
is its number of letters,
denoted $|w| \in \mathbb{N} \cup \{ + \infty \}$.
A \emph{factor} of $w$ is a sequence of consecutive letters of $w$.
For every $0 \leq i \leq j \leq |w|$,
we denote by $w[i,j]$ the factor
$a_{i+1}a_{i+2} \ldots a_j$
of $w$.

A (finite) \emph{graph} is a pair $G = (V,E)$
composed of a finite set of \emph{vertices} $V$
and a set of \emph{edges} $E \subseteq V \times V$
that are pairs of vertices.
A \emph{walk} of $G$ is a sequence
$\rho=s_1,s_2,s_3 \ldots \in V^* \cup V^\omega$
such that each pair of consecutive 
vertices is an edge:
$(s_i,s_{i+1}) \in E$
for every $1 \leq i < |\rho|$.
A (simple) \emph{path} is a walk whose
vertices are all distinct.
A (simple) \emph{cycle} is a walk in which
the first and last vertices are identical,
and all the other vertices are distinct.
A graph is called \emph{undirected}
if its set of edges is symmetric:
$(s,t) \in E \Leftrightarrow (t,s) \in E$.
Two vertices of an undirected graph are called
\emph{neighbors} if they are linked by an edge.

\subsection{Configurations and update rules}

A \emph{configuration} of a graph
is a mapping of its vertices
into the set of states $\{0,1\}$.
We say that a vertex is \emph{dead}
if it is in state $0$, and \emph{alive}
if it is in state $1$.
An \emph{update rule} $\calR$
is a set of
\emph{deterministic},
\emph{time-independant},
and \emph{local}
constraints
determining the evolution of configurations of a graph:
the successor state of each vertex is determined
by its current state and the states of its neighbors.
We define an update rule formally as a pair of functions:
for each $q \in \{0,1\}$,
the \emph{state update function} $\phi_q$
maps any possible neighborhood configuration
to a state in $\{0,1\}$.
The successor state of a vertex
in state $q$ with neighborhood in configuration $c_n$
is then defined as $\phi_q(c_n) \in \{0,1\}$.

In this work, we study
the effect on undirected graphs
of update rules 
definable by state update functions
that are \emph{symmetric}
and \emph{monotonic}
(configurations are partially ordered
by comparing through inclusion
their subsets of live vertices).
In this setting,
a vertex is not able to differentiate its neighbors,
and has to determine its successor state
by comparing the number of its live neighbors with a threshold.
These restrictions give rise to four families of rules,
depending on whether $\phi_0$ and $\phi_1$
are monotonic increasing or decreasing.
We study the two families corresponding to increasing $\phi_0$,
the two others can be dealt with by using symmetric arguments.

\subparagraph{Underpopulation.}
An \emph{underpopulation (update) rule} $\underpopulation{i_0}{i_1}$
is defined by two thresholds:
$i_0 \in \mathbb{N}$ is the minimal number of live neighbors
needed for the birth of a dead vertex,
and $i_1 \in \mathbb{N}$
is the minimal number of live neighbors
needed for a live vertex to stay alive.
Formally, the successor $\phi_q(m)$
of a vertex currently in state $q \in \{0,1\}$
with $m \in \mathbb{N}$ live neighbors is
\[
\begin{array}{ll}
\phi_0(m) = \left\{
\begin{array}{lll}
0 \textup{ if } m < i_0;\\
1 \textup{ if } m \geq i_0.
\end{array}
\right. &
\phi_1(m) = \left\{
\begin{array}{lll}
0 \textup{ if } m < i_1;\\
1 \textup{ if } m \geq i_1.
\end{array}
\right.
\end{array}
\]
This update rule is symmetric and monotonic.

\subparagraph{Overpopulation.}
An \emph{overpopulation (update) rule} $\overpopulation{i_0}{i_1}$
is defined by two thresholds:
$i_0 \in \mathbb{N}$
is the minimal number of live neighbors
needed for the birth of a dead vertex,
and $i_1 \in \mathbb{N}$
is the maximal number of live neighbors
allowing a live vertex to stay alive.
Formally, the successor $\phi_q(m)$
of a vertex currently in state $q \in \{0,1\}$
with $m \in \mathbb{N}$ live neighbors is
\[
\begin{array}{ll}
\phi_0(m) = \left\{
\begin{array}{lll}
0 \textup{ if } m < i_0;\\
1 \textup{ if } m \geq i_0.
\end{array}
\right. &
\phi_1(m) = \left\{
\begin{array}{lll}
0 \textup{ if } m > i_1;\\
1 \textup{ if } m \leq i_1.
\end{array}
\right.
\end{array}
\]
This update rule is symmetric and monotonic.

\subparagraph{Basic computational problems.}
To gauge the complexity
of an update rule,
we study two corresponding computational problems.
Formally, given an update rule $\calR$ and a graph $G$,
the \emph{configuration graph} $\conf{G}{\calR}$
is the (directed) graph whose vertices are the
configurations of $G$,
and whose edges are the pairs
$(c,c')$
such that the configuration $c'$ is successor of $c$
according to the update rule $\calR$.
Note that $\conf{G}{\calR}$ is finite since $G$ is finite.
Moreover,
since the update rule $\calR$ is deterministic,
every vertex of the configuration graph
is the source of a single
infinite walk composed of a finite path followed by a cycle.
\begin{itemize}
\setlength{\itemsep}{0pt}
\setlength{\parskip}{0pt}
\setlength{\parsep}{0pt}
\item
The \emph{configuration reachability problem},
denoted \reach{},
asks, given a graph $G$,
an initial configuration $c_{I}$,
and a final configuration $c_{F}$,
whether the walk in $\conf{G}{\calR}$
starting from $c_{I}$
eventually visits $c_{F}$.
\item
The \emph{long-run average problem},
denoted \average{},
asks, given a threshold $\delta \in [0,1]$,
a graph $G$, and an initial configuration $c_I$,
whether $\delta$ is strictly smaller than
the average ratio of live vertices
in the configurations that are part of the
cycle in $\conf{G}{\calR}$ reached from $c_I$.
\end{itemize}

\subsection{Comparison to other models}
We show similarities and differences
between our update rules and similar models.

\subparagraph{Cellular automata.}
Cellular automata study update rules defined on
(usually infinite) grid graphs~\cite{wolfram2018cellular}.
Compared to the setting studied in this paper,
more rules are allowed
since neither symmetry nor monotonicity is required,
yet underpopulation and overpopulation rules
are not subcases of cellular automata, 
as they are defined for any type of graph,
not only grids.
To provide an easy comparison
between the update rules studied in this paper
and some well-studied
cellular automata,
we now define Rule $54$ and Rule $110$
(according to the numbering scheme presented in~\cite{wolfram1983statistical})
using the formalism of this paper.

\smallskip
\noindent
\textbf{1.} Rule $54$~\cite{boccara1991particlelike,martinez2006phenomenology} coincides with
the overpopulation rule $\overpopulation{1}{0}$ 
applied to the infinite unidimensional linear graph.
A dead vertex becomes alive if
one of its neighbors is alive,
and a live vertex stays alive
only if both its neighbors are dead.
Formally, the successor $\phi_q(m)$
of a vertex currently in state $q \in \{0,1\}$
with $m \in \{0,1,2\}$ live neighbors is
\[
\begin{array}{ll}
\phi_0(m) = \left\{
\begin{array}{lll}
0 \textup{ if } m = 0;\\
1 \textup{ if } m \geq 1.
\end{array}
\right. &
\phi_1(m) = \left\{
\begin{array}{lll}
0 \textup{ if } m \geq 1;\\
1 \textup{ if } m = 0.
\end{array}
\right.
\end{array}
\]
This update rule is symmetric and monotonic.
It can be used to model logical gates~\cite{martinez2006phenomenology},
and is conjectured to be Turing complete.

\smallskip
\noindent
\textbf{2.} Rule $110$~\cite{cook2004universality} is defined over the infinite unidimensional linear graph.
A dead vertex copies the state of its right neighbor,
and a live vertex stays alive
as long as at least one of its neighbors is alive.
Formally, the successor $\phi_q(\ell,r)$
of a vertex currently in state $q \in \{0,1\}$
with left neighbor in state $\ell \in \{0,1\}$
and right neighbor in state $r \in \{0,1\}$ is
\[
\begin{array}{ll}
\phi_0(\ell,r) = r;&
\phi_1(\ell,r) = \left\{
\begin{array}{lll}
0 \textup{ if } \ell = r = 0;\\
1 \textup{ otherwise}.
\end{array}
\right.
\end{array}
\]
This update rule is monotonic, but not symmetric.
It is known to be Turing complete.

\subparagraph{Game of Life.}
Game of Life requires update rules that are symmetric,
but not necessarily monotonic.
The most well known example is
Conway's Game of Life~\cite{gardener1970mathematical,berlekamp2004winning},
that has been adapted in various ways,
for example as 3-D Life~\cite{bays1987candidates}, or the beehive rule~\cite{wuensche2004self}.
Conway’s game of life studies the evolution of the infinite two-dimensional square
grid according to the following update rule: a dead vertex
becomes alive if it has exactly three live neighbors, and a
live vertex stays alive if it has two or three live neighbors.
Formally, the successor $\phi_q(m)$
of a vertex currently in state $q \in \{0,1\}$
with $m \in \mathbb{N}$ live neighbors is
\[
\begin{array}{ll}
\phi_0(m) = \left\{
\begin{array}{lll}
0 \textup{ if } m \neq 3;\\
1 \textup{ if } m = 3.
\end{array}
\right. &
\phi_1(m) = \left\{
\begin{array}{lll}
0 \textup{ if } m \notin \{2,3\};\\
1 \textup{ if } m \in \{2,3\}.
\end{array}
\right.
\end{array}
\]
This update rule is symmetric,
but not monotonic.
It is known to be Turing complete \cite{berlekamp2004winning}.
\section{Motivating Examples}\label{section:motivation}

Our dynamics can represent situations where an individual (a
vertex) adopts a behavior (or a strategy) only if the behavior
is shared by sufficiently many neighboring individuals. Then,
the underpopulation setting corresponds to behaviors that
are abandoned if not enough neighbors keep on using it,
while the overpopulation setting models behaviors that are
dropped once they are adopted by too many.
We present several examples.

\subsection{Underpopulation}

\subparagraph{Innovation.}
The problem of spreading innovation is considered in~\cite{innovation_book, innovation_diffusion}.
Initially, a small group of people starts using a new product and if others see it
used, they adopt the innovation.
In our setting this corresponds to the underpopulation model.
The question of determining whether the innovation gets to some key people can be formalised
as the configuration reachability problem \reach{},
and predicting how many people will eventually be using the innovation
amounts to solve the long-run average problem \average{}.
Similar questions are asked in~\cite{public_opinion_formation},
where the authors study how opinions form.
See Appendix~\ref{appendix:motivation} for more details.

\subparagraph{Positive feedback.}
In the paper~\cite{bandwagon_snob_effects},
the bandwagon and Veblen effects are described.
These consider a fact that the demand for an item
increases with the number of people using that item.
Under this hypothesis, determining the demand
corresponds to solve \average{} for an underpopulation rule.
Many more examples, for example, how people behave depending on what their friends do, can be found
in~\cite{connected_book}.
Anything from emotions to obesity can spread through a network,
usually following small modifications of the underpopulation rule.

\subsection{Overpopulation}

\subparagraph{Anticoordination.}
Imagine that you want to go mushroom hunting. You enjoy the peaceful walk in the forest and
love the taste of fried wild mushrooms, or mushroom soup.
Mushrooming is a solitary activity and if too many of your neighbors decide to
go mushrooming too,
they annoy you, and you find fewer mushrooms in already searched forest.
So, if you were not mushrooming the day before you can get convinced that the mushrooms are growing
by some neighbors that show you baskets full of delicious mushrooms.
However, if you decide to go and see too many people there,
you get discouraged and do not go the next day.

This behavior is called anticoordination and was described in~\cite{bounded_rationality},
it more generally describes optimal exploitation of resources.
The questions here are: does some set of people go mushroom hunting, how many people will be mushroom hunting?
The overpopulation closely corresponds to this with \reach{} and \average{} answering the questions.

\subparagraph{Snob effect.}
Many items are desirable because they are expensive, or unique.
This behavior was observed in~\cite{bandwagon_snob_effects}.
People start doing something, but if too many people do it, it loses appeal.
For instance fashion works this way for all of us: People get inspired by what
they see, but if too many people wear the same outfit, they change it.

\section{Underpopulation: PTIME Algorithm}

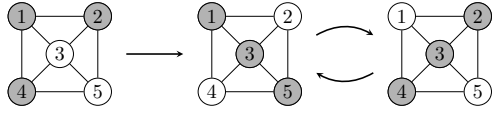
\begin{figure}
  \begin{center}
    \scalebox{0.5}{
    \begin{tikzpicture}
      \foreach \c in {1,6,11}{
        \pgfmathtruncatemacro{\ste}{\c}
        \pgfmathtruncatemacro{\ma}{0}\
        \coordinate (c\c1) at (\ste,\ma+2);
        \node[draw,circle,fill=white,scale=1.5,inner sep=2]
          (n\c1) at (c\c1) {4};
        \coordinate (c\c2) at (\ste,\ma+4);
        \node[draw,circle,fill=white,scale=1.5,inner sep=2]
          (n\c2) at (c\c2) {1};
      }
      \foreach \c in {3,8,13}{
        \pgfmathtruncatemacro{\ste}{\c}
        \pgfmathtruncatemacro{\ma}{0}\
        \coordinate (c\c1) at (\ste,\ma+2);
        \node[draw,circle,fill=white,scale=1.5,inner sep=2]
          (n\c1) at (c\c1) {5};
        \coordinate (c\c2) at (\ste,\ma+4);
        \node[draw,circle,fill=white,scale=1.5,inner sep=2]
          (n\c2) at (c\c2) {2};
      }
      \foreach \c in {2,7,12}{
        \pgfmathtruncatemacro{\ste}{\c}
        \pgfmathtruncatemacro{\ma}{3}
        \pgfmathtruncatemacro{\height}{(\ma)}
        \pgfmathtruncatemacro{\ste}{\c}
        \coordinate (c\c1) at (\ste,\height);
        \node[draw,circle,fill=white,scale=1.5,inner sep=2]
          (n\c1) at (c\c1) {3};
      }
      \foreach \c in {1,6,11}{
        \pgfmathtruncatemacro{\nx}{\c+1}
        \pgfmathtruncatemacro{\nnx}{\c+2}
          \draw (n\c1) -- (n\c2);
          \draw (n\c1) -- (n\nx1);
          \draw (n\c2) -- (n\nx1);
          \draw (n\c1) -- (n\nnx1);
          \draw (n\c2) -- (n\nnx2);
          \draw (n\nx1) -- (n\nnx1);
          \draw (n\nx1) -- (n\nnx2);
          \draw (n\nnx1) -- (n\nnx2);
         
      }
      \node[draw,circle,fill=black!30,scale=1.5,inner sep=2]
       (n11) at (c11) {4};
      \node[draw,circle,fill=black!30,scale=1.5,inner sep=2]
       (n12) at (c12) {1};
      \node[draw,circle,fill=black!30,scale=1.5,inner sep=2]
       (n32) at (c32) {2};
      \node[draw,circle,fill=black!30,scale=1.5,inner sep=2]
       (n62) at (c62) {1};
      \node[draw,circle,fill=black!30,scale=1.5,inner sep=2]
       (n71) at (c71) {3};
      \node[draw,circle,fill=black!30,scale=1.5,inner sep=2]
       (n81) at (c81) {5};
      \node[draw,circle,fill=black!30,scale=1.5,inner sep=2]
       (n111) at (c111) {4};
      \node[draw,circle,fill=black!30,scale=1.5,inner sep=2]
       (n121) at (c121) {3};
      \node[draw,circle,fill=black!30,scale=1.5,inner sep=2]
       (n132) at (c132) {2};

      \draw[-{stealth},very thick]
       ($(c21) + (1.75,0)$) to ($(c71)-(1.75,0)$);
      \draw[-{stealth},very thick]
       ($(c71) + (1.75,0.5)$) to [bend left] ($(c121)-(1.75,-0.5)$);
      \draw[-{stealth},very thick]
       ($(c121) - (1.75,0.5)$) to [bend left] ($(c71)+(1.75,-0.5)$);
    \end{tikzpicture}
    }
  \end{center}
  \caption{Evolution of a graph
  under the underpopulation rule
  $\underpopulation{2}{2}$.
  Live vertices are gray.}\label{underpopulation_example_picture}
\end{figure}

In this section, we study underpopulation rules:
a vertex comes to life if it has
sufficiently many living neighbours,
and then still requires enough
living neighbours to stay alive.
Our main result is as follows:

\begin{restatable}{theorem}{theoremDecidable}\label{theorem:underpopulation}
For every underpopulation rule,
the reachability and long-run average problems
are decidable in polynomial time.
\end{restatable}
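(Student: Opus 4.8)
The plan is to establish a single structural fact: under any underpopulation rule, the walk in the configuration graph issued from any configuration reaches a cycle of length at most $2$ within a polynomial number of steps. Once this is known, both \reach{} and \average{} are solved by plain simulation.

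\emph{Step 1 (a symmetric threshold reformulation).} Fix $\underpopulation{i_0}{i_1}$ on an undirected graph $G=(V,E)$ with adjacency matrix $A$, and identify a configuration with its indicator vector $x\in\{0,1\}^V$. A vertex $v$ with current state $q\in\{0,1\}$ and $m$ live neighbours becomes alive iff ($q=0$ and $m\ge i_0$) or ($q=1$ and $m\ge i_1$), which is exactly the inequality $m+(i_0-i_1)q\ge i_0$. Hence, writing $W = 2A+2(i_0-i_1)I$ and $\theta_v = 2i_0-1$ for every $v$, the one-step map $F$ of $\conf{G}{\underpopulation{i_0}{i_1}}$ satisfies $F(x)_v=1$ if $(Wx)_v\ge\theta_v$ and $F(x)_v=0$ otherwise. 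The matrix $W$ is symmetric because $G$ is undirected, and the factor $2$ ensures that $(Wx)_v$ is always even while $\theta_v$ is always odd, so the threshold is never met with equality. (We may assume $i_0,i_1\le|V|$, since a larger threshold is unsatisfiable and behaves like $|V|$.)

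\emph{Step 2 (a Goles--Olivos energy function).} Along the walk $x^0,x^1,x^2,\dots$ with $x^{t+1}=F(x^t)$ define
\[
E_t \;=\; -(x^{t-1})^{\!\top} W\, x^{t} \;+\; \sum_{v\in V}\theta_v\big(x^{t-1}_v+x^{t}_v\big)\qquad(t\ge 1).
\]
Using the symmetry of $W$ and the defining inequality of $F$, a short computation gives
\[
E_{t+1}-E_t \;=\; \sum_{v\in V}\big((Wx^{t})_v-\theta_v\big)\big(x^{t-1}_v-x^{t+1}_v\big)\;\le\;0 ,
\]
because for every $v$ the sign of $(Wx^t)_v-\theta_v$ forces $x^{t+1}_v=1$ when it is positive and $x^{t+1}_v=0$ when it is negative; since this sign is never zero, $E_{t+1}=E_t$ additionally forces $x^{t-1}_v=x^{t+1}_v$ for all $v$, i.e. $x^{t+1}=x^{t-1}$, so the walk has entered a cycle of length at most $2$. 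Now the entries of $W$ and $\theta$ are integers of absolute value $O(|V|)$, hence $E_t$ is an integer with $|E_t|=O(|V|^2)$; and as long as the walk has not yet entered a cycle of length $\le 2$, $E_t$ strictly decreases, so it drops by at least $1$. Therefore the walk reaches such a cycle within $O(|V|^2)$ steps.

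\emph{Step 3 (the algorithms).} Compute $x^0=c_I,x^1,\dots,x^{N}$ for some $N=O(|V|^2)$ exceeding the transient bound of Step 2 plus $2$; each application of $F$ takes polynomial time, so this is a polynomial-time computation. By Step 2, the list $x^0,\dots,x^N$ already contains every configuration visited by the walk from $c_I$ (the transient followed by the at most two configurations of the terminal cycle). For \reach{}, answer ``yes'' iff $c_F$ occurs in this list. For \average{}, the terminal cycle is $\{x^{N-1},x^N\}$ (or $\{x^N\}$ if $x^{N-1}=x^N$); compute the average of the fractions of live vertices over these one or two configurations and compare it with $\delta$. Both procedures run in polynomial time, proving Theorem~\ref{theorem:underpopulation}. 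The only real obstacle is Step 2: guessing the right potential, and in particular arranging --- via the factor-$2$ rescaling (equivalently, shifting each threshold by $\tfrac12$) --- that the threshold inequality is never tight, so that a plateau of $E_t$ genuinely certifies a $2$-cycle rather than a weaker periodicity statement; the rest is routine bookkeeping.
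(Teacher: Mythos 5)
Your proof is correct, and it reaches the theorem by a genuinely different route than the paper. The paper derives both the period bound and the transient bound from one combinatorial key lemma: it counts, over all vertex histories, the occurrences of the patterns $1\wildcard 0$ and $0\wildcard 1$ (a vertex changing state across two time steps) and shows, by a double-counting argument over the edges, that their total number is at most $2|E|+2(i_0+i_1+1)|V|$; boundedness of this count gives eventual period at most $2$, and the explicit value gives the transient bound. You instead observe that an underpopulation rule is a symmetric integer threshold network --- the reformulation $m+(i_0-i_1)q\ge i_0$ pushes the dependence on the vertex's own state onto the diagonal of a symmetric weight matrix --- and run the classical Goles--Olivos Lyapunov argument for parallel iteration. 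The factor-$2$/odd-threshold rescaling correctly eliminates ties, so a plateau of $E_t$ certifies $x^{t+1}=x^{t-1}$ exactly; your algebra and sign analysis check out, and integrality together with the $O(|V|^2)$ bound on $|E_t|$ (after capping thresholds at $|V|$) gives a polynomial transient, after which Step 3 is the same simulation the paper performs. Comparing the two: your argument is shorter, identifies the structural reason for convergence (symmetry of the interaction), and extends verbatim to arbitrary symmetric integer-weighted threshold updates, which in particular also covers the other monotone families that the paper dispatches ``by symmetric arguments''; the paper's argument is elementary and self-contained, and its bound $2|E|+2(i_0+i_1+1)|V|+4$ on the longest simple path in $\conf{G}{\underpopulation{i_0}{i_1}}$ is stated more explicitly in terms of the graph parameters. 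Both are complete proofs of the theorem.
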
 

The above result depends on two key propositions.
Let us start by having a look at an example.
Figure \ref{underpopulation_example_picture}
presents successive configurations of a graph
where each vertex requires at least
$2$ living neighbours
in order to be alive in the next step.
The resulting behaviour is quite simple:
after the initial step,
the graph keeps on oscillating between
two configurations:
the middle vertex has reached a stable state,
and the other vertices alternate between being dead and alive.
We show that the behaviour of graphs 
following underpopulation rules
can actually never be much more complicated than this.
First, no huge cycle of configurations can happen.

\begin{restatable}{proposition}{theoremCycle}\label{theorem:underpopulation_cycle}
For every $i_0,i_1 \in \mathbb{N}$ and
every undirected graph $G$,
the configuration graph
$\conf{G}{\underpopulation{i_0}{i_1}}$
admits no simple cycle of length bigger than two.
\end{restatable}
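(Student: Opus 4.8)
The plan is to build a Lyapunov-type potential on \emph{pairs} of consecutive configurations that never increases along a trajectory and strictly decreases unless the trajectory already oscillates with period at most two; this is the specialisation to our setting of the classical Goles--Olivos argument for symmetric threshold networks.

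\textbf{Recasting the rule.} Identify a configuration $c$ with its indicator vector $x \in \{0,1\}^{V}$. Put weight $1$ on every edge of $G$, add a self-loop of weight $i_0 - i_1$ at every vertex, and give every vertex the threshold $i_0 - \tfrac12$; let $W$ be the resulting symmetric matrix and $\theta$ the constant threshold vector. Unwinding the definition of $\underpopulation{i_0}{i_1}$, the successor $x'$ of $x$ satisfies $x'(v) = 1 \iff (Wx)(v) > \theta(v)$, and moreover $(Wx)(v) - \theta(v)$ is a half-integer, hence nonzero. (Concretely $(Wx)(v)$ is the number of live neighbours of $v$ plus $(i_0-i_1)\,x(v)$; the $\tfrac12$ shift is there only to break ties, and this is the technical crux.)

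\textbf{The potential.} For consecutive configurations $c \to c'$ put $E(c,c') = \theta^{\mathsf T}(x + x') - x'^{\mathsf T} W x$ (combinatorially: $(i_0-\tfrac12)(|c|+|c'|)$ minus the number of ordered pairs of adjacent vertices $(u,v)$ with $u$ alive in $c'$ and $v$ alive in $c$, minus $(i_0-i_1)\,|c\cap c'|$), and along $c_0 \to c_1 \to \cdots$ set $E_t = E(c_{t-1},c_t)$. Using $W = W^{\mathsf T}$ one gets the telescoping identity
\[
E_{t+1} - E_t \;=\; -\sum_{v\in V}\bigl((Wx_t)(v)-\theta(v)\bigr)\bigl(x_{t+1}(v)-x_{t-1}(v)\bigr).
\]
Each summand is $\ge 0$: if $(Wx_t)(v)-\theta(v) > 0$ then $x_{t+1}(v)=1 \ge x_{t-1}(v)$, and if it is $< 0$ then $x_{t+1}(v)=0 \le x_{t-1}(v)$. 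Hence $E_{t+1}\le E_t$, and since the bracket $(Wx_t)(v)-\theta(v)$ is never zero, equality holds iff $x_{t+1}(v)=x_{t-1}(v)$ for all $v$, i.e.\ iff $c_{t+1}=c_{t-1}$.

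\textbf{Conclusion and the obstacle.} On a simple cycle $c_0 \to c_1 \to \cdots \to c_{k-1} \to c_0$ the sequence $(E_t)$ is $k$-periodic and non-increasing, hence constant, so the equality case holds at every step: $c_{t+1}=c_{t-1}$ for all $t$ (indices mod $k$). Taking $t=1$ gives $c_2=c_0$, which is impossible for a simple cycle with $k \ge 3$ since its configurations are pairwise distinct; therefore $k \le 2$. The delicate point is exactly the tie-breaking via the $\tfrac12$ shift: without it the summands above can still vanish when $x_{t+1}(v)\neq x_{t-1}(v)$ (precisely when $v$ sits exactly at its threshold), which would yield only ``non-increasing'' rather than ``strictly decreasing off period-$2$ orbits''; one should also check the recasting and the half-integrality uniformly in both regimes $i_0\le i_1$ and $i_0>i_1$, where the self-loop weight $i_0-i_1$ changes sign.
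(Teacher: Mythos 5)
Your proof is correct, and it takes a genuinely different route from the paper's. You use the classical Goles--Olivos energy argument for symmetric threshold networks: encode $\underpopulation{i_0}{i_1}$ as a strict threshold update $x'(v)=1 \iff (Wx)(v) > i_0-\tfrac12$ with symmetric $W$ (unit edge weights plus a self-loop of weight $i_0-i_1$), observe that the half-integer shift makes ties impossible in both regimes $i_0\le i_1$ and $i_0>i_1$, and show that the quadratic potential $E(c,c')=\theta^{\mathsf T}(x+x')-x'^{\mathsf T}Wx$ is non-increasing with equality exactly when $c_{t+1}=c_{t-1}$; periodicity on a cycle forces equality everywhere and hence period at most two. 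I checked the recasting, the telescoping identity (which does use $W=W^{\mathsf T}$), and the sign analysis of each summand; all are sound, and the worry you raise at the end about the two regimes is moot since nothing in the argument uses positivity of the self-loop weight. The paper instead proves a purely combinatorial key lemma (Lemma~\ref{lemma:bound_change}) bounding the total number of two-step state switches, $\numbu{1\wildcard 0}_m+\numbu{0\wildcard 1}_m \le 2|E|+2(i_0+i_1+1)|V|$, by double-counting factors of vertex histories, and derives the period bound as a corollary of mere boundedness. Your Lyapunov route is arguably cleaner and more standard for this proposition alone; the paper's quantitative lemma earns its keep because the same explicit bound is reused to prove Proposition~\ref{proposition:underpopulation} (polynomial transient length), which your argument would also yield, but only after additionally noting that $E$ is a bounded half-integer multiple of $\tfrac12$ and each strict decrease is by at least $\tfrac12$.
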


Moreover, a cycle is always reached early in the process.

\begin{restatable}{proposition}{propositionReach}\label{proposition:underpopulation}
For every $i_0,i_1 \in \mathbb{N}$ and
every undirected graph $G$,
the configuration graph
$\conf{G}{\underpopulation{i_0}{i_1}}$
admits no simple path of length
$2|E| + 2(i_0+i_1+1)|V| + 4$ or more.
\end{restatable}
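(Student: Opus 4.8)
The plan is to show that along the unique walk from any configuration, once enough steps have passed the walk must already have entered a cycle (necessarily of length at most two, by Proposition~\ref{theorem:underpopulation_cycle}), so no simple path can be longer than the claimed bound. The key quantity to track is a potential that is (weakly) monotone along the walk. The natural candidate is the number of live vertices, but that is not monotone in general; instead I would track, separately, the set of vertices that are \emph{destined to stabilize dead} and the set \emph{destined to stabilize alive}, and argue these sets only grow. Concretely, observe that under an underpopulation rule a dead vertex comes alive only with $\geq i_0$ live neighbors and a live vertex stays alive only with $\geq i_1$ live neighbors — both constraints push toward death when neighbors die. So I would first prove a monotonicity lemma: if $c \to c'$ and $d \to d'$ are single steps with $c \leq d$ (pointwise, i.e.\ the live set of $c$ is contained in that of $d$), then $c' \leq d'$. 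This follows immediately from the monotonicity of $\phi_0$ and $\phi_1$ already noted in the preliminaries.

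Next I would exploit this monotonicity against the walk itself. Let $c_0, c_1, c_2, \dots$ be the walk from $c_I$. Compare the walk started at $c_0$ with the walk started at $c_2$: if ever $c_0 \leq c_2$ or $c_2 \leq c_0$, monotonicity propagates the inequality forward through all even (resp.\ all) indices, and combined with the length-$\leq 2$ cycle bound this forces the even-indexed (or the full) sequence to be eventually constant after controllably many steps. The remaining case is that $c_0$ and $c_2$ are incomparable. Here I would switch to a vertex-by-vertex argument: for each vertex $v$, track the sequence of its states $c_0(v), c_1(v), c_2(v), \dots$. I would argue that each vertex can change its state only a bounded number of times before settling into a period-$\leq 2$ pattern, and that the total number of changes across all vertices is bounded by something like $2|E| + 2(i_0+i_1+1)|V|$ — the $|E|$ term accounting for edges that "switch orientation" of influence, the $(i_0+i_1)|V|$ term accounting for the threshold slack each vertex can absorb. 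Since any simple path in the configuration graph has all distinct configurations, and two consecutive configurations that are identical would already be a fixed point, the number of distinct configurations on the path is at most one more than the total number of single-vertex state changes, plus a small additive constant for entering the final $2$-cycle; this yields the bound $2|E| + 2(i_0+i_1+1)|V| + 4$.

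The main obstacle I anticipate is making the "bounded number of state changes per vertex" argument precise in the incomparable case, where the naive potential (live count) genuinely oscillates. The clean way is probably to define, for a fixed vertex $v$, a weight that combines $v$'s state with a count of how far $v$'s live-neighbor total sits from the relevant threshold ($i_0$ if $v$ is dead, $i_1$ if alive), and show that across the whole graph the sum $\Phi(c_t) = \sum_v w_v(c_t)$ decreases by at least one at every step that is not yet on the cycle, except for at most $|E|$ "bad" steps where an edge reverses its contribution — these bad steps being themselves bounded because an edge $\{u,v\}$ can only contribute in one direction before both endpoints are committed. Getting the bookkeeping to land exactly on $2|E| + 2(i_0+i_1+1)|V|$ rather than a looser constant-factor bound will require care, but the structure — monotonicity lemma, then a case split on comparability of $c_0$ and $c_2$, then an edge-plus-threshold potential in the hard case — is the route I would take.
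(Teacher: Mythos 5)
Your high-level target is the right one --- the paper's proof also reduces the path-length bound to showing that the total number of ``two-step state switches'' (occurrences of the factors $1\wildcard 0$ and $0\wildcard 1$ in the vertex histories) is at most $2|E| + 2(i_0+i_1+1)|V|$, and then converts this to a path-length bound by splitting the walk into its odd- and even-indexed subsequences, each of which consists of pairwise distinct configurations, so that each consecutive pair within each subsequence forces at least one such switch. However, your proposal has two genuine problems. First, your monotonicity lemma is false for general underpopulation rules. Take $\underpopulation{1}{2}$ and a vertex $v$ with a single neighbour $u$: in $c$ let $v$ be dead and $u$ alive, in $d$ let both be alive, so $c \leq d$. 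Then $v$ becomes alive in $c'$ (it has $1 \geq i_0$ live neighbours) but dies in $d'$ (it has $1 < i_1$ live neighbours), so $c' \not\leq d'$. The ``monotonic'' in the preliminaries refers to each state update function $\phi_q$ being monotone in the neighbourhood configuration separately; global monotonicity of the one-step map would additionally require $\phi_0 \leq \phi_1$ pointwise, i.e.\ $i_1 \leq i_0$. So the first branch of your case analysis collapses, and in any case the comparability split is not needed: the paper handles all cases uniformly.

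Second, and more importantly, the decisive step --- proving the bound $2|E| + 2(i_0+i_1+1)|V|$ on the number of two-step switches --- is exactly what you defer (``will require care''), and it is where essentially all of the work lies. This is the paper's key Lemma~\ref{lemma:bound_change}, proved by a double-counting argument over synchronized factors: the threshold inequalities give $\numbu{100,\wildcard 1 \wildcard} \leq (i_0-1)\numbu{100}$, $\numbu{001,\wildcard 1 \wildcard} \geq i_0 \numbu{001}$, and the analogues for $110$ and $011$; boundary counting gives $|\numbu{100}-\numbu{001}| \leq |V|$ and $|\numbu{011}-\numbu{110}| \leq |V|$; and the symmetry $\numbu{\wildcard 1, 1\wildcard} = \numbu{1\wildcard, \wildcard 1}$ expanded on both sides yields an identity accurate up to $|E|$. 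Combining these lands exactly on the stated constant. Your proposed potential $\Phi$ is never defined concretely and the claimed per-step decrease is not established, so the proposal does not constitute a proof. One further caution: be careful to count changes between configurations two steps apart rather than consecutive changes --- the latter quantity is unbounded because the walk terminates in a $2$-cycle in which vertices may flip forever, which is why the factors $1 \wildcard 0$ and $0 \wildcard 1$ (rather than $10$ and $01$) are the right objects to count.
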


We now present the proof of Theorem \ref{theorem:underpopulation},
by showing that Proposition
\ref{proposition:underpopulation}
yields a polynomial time algorithm
for both \reach{}
and \average{}.

\begin{proof}[Proof of Theorem \ref{theorem:underpopulation}]
Since underpopulation rules are deterministic,
once a vertex is repeated in a walk
in $\conf{G}{\underpopulation{i_0}{i_1}}$,
no new configuration can be visited.
Therefore,
Proposition \ref{proposition:underpopulation}
bounds polynomially the number of configurations reachable from
an initial configuration.
Since computing the successor of a configuration
and checking the equality of configurations
can both be done in polynomial time,
we obtain the following polynomial time
algorithms solving \reach{}
and \average{}:
first, we list all the configurations reachable
from the initial configuration,
then we check if the final configuration
is part of it,
respectively if the rate of live vertices
in the reached loop
is higher than the required threshold.
\end{proof}

The remainder of this section is
devoted to the proof of 
Propositions \ref{theorem:underpopulation_cycle}
and 
\ref{proposition:underpopulation}.
Let us fix an underpopulation rule
$\underpopulation{i_0}{i_1}$,
a graph $G = (V,E)$,
and an initial configuration of $G$.
Our proofs rely on a key lemma that sets a bound
on the number of times
a vertex of $G$ switches its state between two
configurations separated by two time steps.
We begin by introducing some technical
concepts and notations,
then we state our key lemma (Subsection \ref{subsec_under_key}).
Afterwards, we proceed with the formal proofs
of Propositions \ref{theorem:underpopulation_cycle} and \ref{proposition:underpopulation}
(Subsection \ref{subsec_under_proof}).

\subsection{Key Lemma}\label{subsec_under_key}

We begin by introducing some auxiliary notation,
and then we state our key lemma.

\subparagraph{Histories.}
The \emph{history} of a vertex $s \in V$ is the infinite word 
$\word{s} \in \{0,1\}^\omega$
whose letters are the successive states of $s$.
For instance, in the setting 
depicted in Figure \ref{underpopulation_example_picture},
the histories are:
\[
\tau_1 = 1(10)^{\omega} \ \ \
\tau_2 = (10)^{\omega} \ \ \
\tau_3 = 01^{\omega} \ \ \
\tau_4 = (10)^{\omega} \ \ \
\tau_5 = (01)^{\omega}
\]
The state of vertex $3$ stabilises after the first step,
and the other four vertices end up oscillating between two states.
The proofs of this section rely on counting, in the histories of $G$,
the number of factors (sequence of consecutive letters)
matching some basic regular expressions.
In order to easily refer to these numbers,
we introduce the following notations.
Let us consider the alphabet $\{0,1,\wildcard \}$,
where $\wildcard$ is a wildcard symbol
matching both $0$ and $1$.
Given an integer
$m \in \mathbb{N}$
and a word $y \in \{0,1,\wildcard \}^*$
of size $n \leq m$,
we denote by $\numbu{y}_m \in \mathbb{N}$
the number of factors of the prefixes $\word{s}[0,m]$, $s \in V$,
that match the expression $y$.
Formally,
\begin{align*} 
&\numbu{y}_m
= \big{|}
\{
(s,i) \in V \times \mathbb{N} |
i+n \leq m,
\word{s}[i,i+n] = y
\}
\big{|}.
\end{align*}
Additional definitions that are required for the technical proofs of this paper,
along with examples illustrating them, can be found in the appendix (Part \ref{appendix:examples}).

\subparagraph{Key Lemma.}
We show that we can bound the number of state switches
of the vertices of $G$
between two configurations separated by two
time steps.

\begin{restatable}{lemma}{keyLemma}
\label{lemma:bound_change}
For every $m \geq 3$,
the equation
$
\numbu{1 \wildcard 0}_m + \numbu{0 \wildcard 1}_m \leq 2|E| + 2(i_0+i_1+1)|V|
$
holds.
\end{restatable}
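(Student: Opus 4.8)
Here is how I would prove the Key Lemma. The plan is to exhibit a Lyapunov (potential) function on pairs of consecutive configurations that strictly decreases whenever a vertex performs a ``two‑step switch'', in the spirit of the classical energy arguments for symmetric threshold networks, and then to read off the bound from the (small) total variation of that potential.

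First I would fix notation. Write $c_1,c_2,\ldots$ for the successive configurations ($c_1$ the initial one, $c_{t+1}$ the successor of $c_t$ under $\underpopulation{i_0}{i_1}$), and for a vertex $s$ let $\ell_t(s) = |\{\, s' \in \neighbours{s} : c_t(s')=1 \,\}|$ be its number of live neighbours at time $t$, so the rule reads $c_{t+1}(s)=1 \iff \ell_t(s) \geq \theta(c_t(s))$ with $\theta(0)=i_0$, $\theta(1)=i_1$. Unwinding the definition of $\numbu{\cdot}_m$, the quantity to bound is exactly the total number of two‑step switches:
\[
\numbu{1 \wildcard 0}_m + \numbu{0 \wildcard 1}_m \;=\; \sum_{t=2}^{m-1} \big|\{\, s \in V : c_{t-1}(s) \neq c_{t+1}(s) \,\}\big|.
\]

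Next I would introduce, for configurations $c,c'$, the potential
\[
\Phi(c,c') \;=\; -\!\!\sum_{(s,s') \in E} c(s)\,c'(s') \;+\; \sum_{s \in V} \psi\big(c(s),c'(s)\big),
\]
where $\psi\colon\{0,1\}^2 \to \mathbb{R}$ is the \emph{symmetric} function with $\psi(0,0)=0$, $\psi(0,1)=\psi(1,0)=i_0-\frac{1}{2}$, $\psi(1,1)=i_0+i_1-1$; put $\Phi_t=\Phi(c_{t-1},c_t)$ for $t\geq 2$. The edge term lies in $[0,|E|]$, and since the three values of $\psi$ always lie in an interval of length at most $i_0+i_1+1$ (a one‑line case check on $0,\ i_0-\frac12,\ i_0+i_1-1$), the vertex term ranges over an interval of length at most $(i_0+i_1+1)|V|$; hence $\Phi$ ranges over an interval of length at most $|E|+(i_0+i_1+1)|V|$. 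The heart of the argument is the per‑step computation: using that $E$ is symmetric one rewrites $\Phi_{t+1}-\Phi_t = \sum_{s\in V}\delta_t(s)$ with $\delta_t(s) = -\big(c_{t+1}(s)-c_{t-1}(s)\big)\ell_t(s) + \psi(c_t(s),c_{t+1}(s)) - \psi(c_{t-1}(s),c_t(s))$, and then one bounds $\delta_t(s)$ by cases on $\big(c_{t-1}(s),c_{t+1}(s)\big)$. If $c_{t-1}(s)=c_{t+1}(s)$, symmetry of $\psi$ gives $\delta_t(s)=0$. If they differ, the identity $\psi(q,1)-\psi(q,0)=\theta(q)-\frac12$ (which is precisely how $\psi$ was chosen) together with $c_{t+1}(s)=1 \iff \ell_t(s)\geq\theta(c_t(s))$ and the integrality of $\ell_t(s)$ yields $\delta_t(s)\leq-\frac12$ in both the $0\to1$ and the $1\to0$ case. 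Hence $\Phi_{t+1}-\Phi_t \leq -\frac12\,\big|\{\, s : c_{t-1}(s)\neq c_{t+1}(s) \,\}\big|$, and summing over $t=2,\ldots,m-1$ and telescoping gives $\numbu{1 \wildcard 0}_m + \numbu{0 \wildcard 1}_m \leq 2(\Phi_2-\Phi_m) \leq 2|E| + 2(i_0+i_1+1)|V|$.

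The only genuine obstacle is guessing the potential $\Phi$ — concretely, identifying the right ``hysteresis bookkeeping'' term $\psi$, which must simultaneously satisfy $\psi(q,1)-\psi(q,0)=\theta(q)-\frac12$ (so that every switch forces a drop of at least $\frac12$, regardless of $i_0,i_1$ and of the intermediate state $c_t(s)$) and keep its values inside a short interval (so that the cumulative drop, hence the number of switches, is small). Once $\psi$ is in hand, verifying the rewriting $\Phi_{t+1}-\Phi_t=\sum_s\delta_t(s)$, the three‑case estimate on $\delta_t(s)$, and the telescoping are all routine; some care is only needed to confirm the spread of the $\psi$-values is at most $i_0+i_1+1$ in every regime of $i_0,i_1$ (the bound is tight exactly when $i_0=i_1=0$, which is why the ``$+1$'' appears).
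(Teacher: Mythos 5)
Your proof is correct, and it takes a genuinely different route from the paper's. The paper never introduces a potential function: it works entirely with the factor-counting machinery ($\numbu{100}$, $\numbu{001}$, $\numbu{110}$, $\numbu{011}$ and their synchronised counterparts $\numbu{y,\wildcard 1 \wildcard}$ along edges), derives four inequalities directly from the thresholds of $\underpopulation{i_0}{i_1}$, two telescoping identities from counting $00$ and $11$ by their left versus right extensions (contributing the $2|V|$-type terms), and one double-counting identity $\numbu{\wildcard 1,1\wildcard}=\numbu{1\wildcard,\wildcard 1}$ over the symmetric edge set (contributing the $|E|$ term), and then combines these seven relations algebraically. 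Your Lyapunov argument is essentially the dual, per-time-step presentation of the same mechanism --- the symmetric quadratic edge term of $\Phi$ plays the role of the paper's double-counting identity, and the hysteresis term $\psi$ plays the role of the telescoping vertex counts --- and it is the classical energy argument for parallel symmetric threshold networks (Goles--Olivos / Poljak--S\r{u}ra). I checked the details: the identity $\psi(q,1)-\psi(q,0)=\theta(q)-\tfrac12$, the three-case estimate giving $\delta_t(s)\le-\tfrac12$ on every two-step switch (integrality of $\ell_t(s)$ being needed only in the $1\to 0$ case), the spread of the $\psi$-values being at most $i_0+i_1+1$ in all regimes of $i_0,i_1\in\mathbb{N}$, and the final telescoping all go through, and your constants match the stated bound $2|E|+2(i_0+i_1+1)|V|$ exactly. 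What your approach buys is a single explicit monotone quantity, which makes Proposition~\ref{theorem:underpopulation_cycle} (eventual period at most two) an immediate corollary and localises all the work to one time step; what the paper's approach buys is that it avoids having to guess the potential and stays within the factor-counting language set up for the rest of the section, with each constant traceable to a named inequality.
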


\subparagraph{Proof sketch.}
The basic idea
is that the current state of a vertex $s \in V$
indirectly contributes to $s$ having the same state
two steps in the future:
let us suppose that $s$
is alive at time $i \in \mathbb{N}$.
Then $s$ contributes towards
making its neighbours alive
at time $i+1$,
which in turn contribute towards making
their own neighbours, including $s$,
alive at time $i+2$.
We formalise this idea
by studying in details the number
of occurrences of diverse factors 
in the histories of $G$.
The full proof can be found in the appendix (Part \ref{appendix:lemma}).

\subsection{Proof of 
Proposition \ref{theorem:underpopulation_cycle}
and Proposition \ref{proposition:underpopulation}}\label{subsec_under_proof}

Using Lemma \ref{lemma:bound_change},
we are finally able to demonstrate the two results
left unproven at the beginning of this section.
First, we prove Proposition \ref{theorem:underpopulation_cycle}.
Note that the proof only uses the fact that
$\numbu{1\wildcard 0}_m + \numbu{0\wildcard 1}_m$
is bounded, and not the precise bound.

\theoremCycle*

\begin{proof}
Since Lemma \ref{lemma:bound_change}
bounds the value of
$\numbu{1\wildcard 0}_m + \numbu{0\wildcard 1}_m$
for every $m \in \mathbb{N}$,
then for every vertex $s \in V$,
factors of the form $1 \wildcard 0$ or $0 \wildcard 1$ 
only appear in a bounded prefix of $\word{s}$.
Therefore, the corresponding infinite suffix
only contains factors of the form
$1 \wildcard 1$ or $0 \wildcard 0$,
which immediately yields that the periodic part of $\word{s}$
is of size either $1$ or $2$.
Since this is verified by every vertex,
this shows that
under the underpopulation rule 
$\underpopulation{i_0}{i_1}$,
the graph $G$ either reaches a stable
configuration, or ends up alternating
between two configurations.
\end{proof}

Finally,
we prove Proposition \ref{proposition:underpopulation}.
This time, we actually need the precise bound
exposed by Lemma \ref{lemma:bound_change}.

\propositionReach*

\begin{proof}
We prove that if
no cycle is completed in
the first $2i$ steps of the process
for some $0 < i \in \mathbb{N}$,
then the histories of $G$ admit
at least $2i-2$
factors of the form $1\wildcard 0$ or $0 \wildcard 1$.
Since 
$\numbu{1\wildcard 0}_{2i+2} +
\numbu{0\wildcard 1}_{2i+2}$
is smaller than $2|E| + 2(i_0+i_1+1)|V|$
by Lemma \ref{lemma:bound_change},
this implies that
$2i \leq 2|E| + 2(i_0+i_1+1)|V|+2$,
which proves the lemma.

Let $i \in \mathbb{N}$ be a strictly positive integer,
and let us suppose that no configuration is repeated amongst
the first $2i$ steps of the process.
Let us first focus
on the sequence of $i$ odd configurations
$c_1,c_3,c_5,\ldots,c_{2i-1}$ of $G$.
By supposition, no configuration is repeated,
hence for every $1 \leq j \leq i-1$,
at least one vertex has distinct states
in the configurations $c_{2j-1}$
and $c_{2j+1}$.
These $i-1$ changes either consist 
in the death of a live vertex,
counting towards the value
$\numbu{1\wildcard 0}_{2i+2}$,
or in the birth of a dead vertex,
counting towards the value
$\numbu{0 \wildcard 1}_{2i+2}$.
Similarly, 
focusing on the sequence of $i$ even configurations
$c_2,c_4,c_6,\ldots,c_{2i}$
yields $i-1$ distinct occurrences
of vertices changing state between
two successive positions of even parity,
counting towards the value
$\numbu{1 \wildcard 0}_{2i+2} + \numbu{0 \wildcard 1}_{2i+2}$.
As a consequence,
$
2i-2 \leq \numbu{1 \wildcard 0}_{2i+2} + \numbu{0 \wildcard 1}_{2i+2}
$,
hence, by Lemma \ref{lemma:bound_change},
$
2i \leq 2|E| + 2(i_0+i_1+1)|V| + 2
$,
which concludes the proof.
\end{proof}

\section{Overpopulation: PSPACE completeness}
In this section, we study overpopulation rules:
a vertex comes to life if it has
sufficiently many living neighbors,
and dies if it has too many living neighbors.
Our result is in opposition to the result of the previous section:

\begin{restatable}{theorem}{theoremPSpaceHardness}\label{theorem:overpopulation}
  The following assertions hold:
  \begin{itemize}
\setlength{\itemsep}{0pt}
\setlength{\parskip}{0pt}
\setlength{\parsep}{0pt}
    \item For every overpopulation rule, the reachability and long-run average problems
      are in PSPACE.
    \item For the specific case $\overpopulation{2}{1}$, the reachability and
      long-run average are PSPACE-hard.
  \end{itemize}
\end{restatable}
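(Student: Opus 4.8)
The plan splits along the two bullets, and only the second is substantial. For the upper bound, I would use that an overpopulation rule is deterministic, a configuration is $|V|$ bits, and its successor is computable in polynomial time; hence the unique walk in $\conf{G}{\overpopulation{i_0}{i_1}}$ from $c_I$ enters a cycle after at most $2^{|V|}$ steps. To decide \reach{} I would simulate that walk, storing only the current configuration and a $|V|$-bit step counter, answering yes if $c_F$ appears before the counter saturates. For \average{} I would first locate the cycle by pointer-chasing (Floyd's tortoise-and-hare), which needs only two configurations and an $O(|V|)$-bit counter, then traverse the cycle once more accumulating the total number of live vertices and the cycle length in $\mathrm{poly}(|V|)$-bit registers, and finally compare the resulting ratio with $\delta$. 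All of this runs in polynomial space, and since the argument uses neither symmetry nor monotonicity it covers arbitrary Game of Life rules, which gives the first bullet.

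For the lower bound I would reduce from the PSPACE-complete problem of deciding whether a deterministic Turing machine $M$ using space $p(|x|)$ accepts $x$. First I would preprocess $M$ into a machine $M'$ that still uses polynomial space, cannot loop forever (attach a counter counting configurations up to $2^{O(p(|x|))}$, which costs only $O(p(|x|))$ extra space, and force a move to a rejecting state on overflow), and eventually reaches and then stays in one of two designated halting configurations, accepting or rejecting. Then I would construct, in time polynomial in $|x|$, a graph $G$ and configurations $c_I, c_F$ so that running $\overpopulation{2}{1}$ on $G$ from $c_I$ faithfully simulates $M'$ on $x$: the tape is a row of $O(1)$-vertex cell gadgets, the head and control state are a marker gadget moving along that row, one transition of $M'$ is realised by a fixed number $T$ of Game of Life rounds, and I would arrange that reaching the accepting (resp.\ rejecting) halting configuration drives the whole of $G$ into the all-live (resp.\ all-dead) configuration. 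Taking $c_F$ to be the all-live configuration makes $c_I \to^{\ast} c_F$ equivalent to ``$M$ accepts $x$'', giving PSPACE-hardness of \reach{}; and since the cycle finally reached is then either all-live (ratio $1$) or all-dead (ratio $0$), the same construction with $\delta = 1/2$ gives PSPACE-hardness of \average{}.

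Realising this simulation is where essentially all the difficulty lies, and I expect it to be the main obstacle. Under $\overpopulation{2}{1}$ a live vertex survives only with at most one live neighbour and a dead vertex ignites as soon as it sees two, so there is no free ``inert background'' and no free ``static wire''; I would therefore have to engineer a small library of gadgets — a directed wire carrying a one-bit pulse at a fixed speed, a delay/synchroniser ensuring that all inputs reaching a gate belong to the same simulated step, fan-out, NOT, a universal two-input gate such as NAND, and a signal crossover — each a fixed finite subgraph whose behaviour over $T$ rounds is checked by a finite case analysis, and then wire these into the transition circuit of $M'$. I expect the pulse-on-a-wire primitive and the crossover to be the delicate cases, because the rule makes a travelling signal prone both to dying out and to spuriously igniting adjacent vertices, so the spacing and shape of pulses must be chosen carefully.

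Finally, for the strengthening to constant-degree regular graphs I would design every gadget with maximum degree bounded by a fixed constant $d$ (splitting any high-degree vertex into a bounded-degree subgadget) and then pad: to each vertex of degree below $d$ attach a private ``frozen'' subgraph that is all-dead in $c_I$, stays all-dead forever — each of its vertices is adjacent to at most one vertex of the original graph, hence always sees at most one live neighbour and never ignites — and is internally arranged so that after the attachments every vertex of $G$ has degree exactly $d$. Because the padding never changes any neighbour count of an original vertex, the simulation is unaffected, so \reach{} and \average{} for $\overpopulation{2}{1}$ are PSPACE-hard already on $d$-regular graphs; combined with the PSPACE membership above, this yields PSPACE-completeness.
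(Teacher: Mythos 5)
Your first bullet matches the paper's argument (the paper simply runs $2^{|V|}$ steps to land on the cycle rather than using Floyd's algorithm, but both are clearly PSPACE), and your overall plan for the second bullet --- simulate a polynomial-space Turing machine by a circuit of small gadgets operating under $\overpopulation{2}{1}$, then pad with permanently dead vertices to make the graph regular of constant degree --- is exactly the paper's strategy. However, there is a concrete error in how you close the reduction: you make acceptance drive $G$ into the \emph{all-live} configuration and take that as $c_F$ (and as the ratio-$1$ cycle for \average{}). Under $\overpopulation{2}{1}$ the all-live configuration of any graph in which every vertex has at least two neighbours is not on a cycle: every live vertex sees at least two live neighbours and dies, so all-live maps to all-dead in one step, and the eventually-reached cycle has ratio $0$ in both the accepting and the rejecting case, which kills the \average{} reduction with $\delta = 1/2$. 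Worse, the all-live configuration is unreachable in your own construction: your degree-padding consists of vertices that ``stay all-dead forever,'' so $c_F$ is never reached even when $M$ accepts, which also breaks the \reach{} reduction. (Independently of the padding, arranging for every vertex of a large gadget graph to be simultaneously alive is essentially infeasible under this rule.)

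The paper orients the reduction the other way, and you should too: encode the final state and the blank symbol by all-zero strings, require the machine to clear its tape before halting, and make \emph{halting} correspond to reaching the \emph{all-dead} configuration, while a non-halting computation keeps some live vertices circulating forever. Then $c_F$ is the all-dead configuration for \reach{}, and \average{} is asked with $\delta = 0$ (is the long-run ratio strictly positive?), both of which are compatible with the dead padding and with the fact that all-dead is a genuine fixed point of $\overpopulation{2}{1}$. With that fix your outline coincides with the paper's proof; the remaining content --- actually exhibiting the wire, splitter, OR, AND, clocked NOT and storage-unit gadgets and verifying their timing --- is the part you explicitly defer, and it is where the paper spends its effort.
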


\begin{remark}
  The first item of Theorem~\ref{theorem:overpopulation} (the PSPACE upper bound) is straightforward and presented
  in Appendix~\ref{appendix:overpopulationAlgorithm}.
  Our main contribution is item~$2$ (PSPACE hardness).
  We present a graph construction simulating a Turing machine.
  In addition, we show that our basic construction can be
  modified to ensure that we obtain a regular graph of degree $10$
  (details in Appendix~\ref{appendix:regularization}).
\end{remark}

\subsection{General idea for hardness}
We create a graph and an initial position simulating a polynomial-space Turing machine.
The graph is mostly composed of dead vertices,
with some live vertices that carry signals and store data.
The graph is regular and consists of \textbf{blobs} of vertices.
One blob corresponds to one cell of the tape and stores a tape alphabet symbol.
Blobs are connected in a row, and at most a single blob is active at every point in time.
The active blob receives a signal that corresponds to the state of the Turing machine.
It computes the transition function using the received signal and its stored value.
The result of the transition function is then used to
(1) modify the content of the blob
and (2) send the resulting state to the neighboring blob, activating it.

\begin{figure}
  \begin{minipage}{0.45\textwidth}
            \begin{center}
            \scalebox{0.5}{        
\begin{minipage}{.5\textwidth}
\begin{center}
\begin{tikzpicture}
\node[draw,circle,fill=black] (0Wire0) at (0, 0.0) {};
\node[draw,circle,fill=black] (0Wire1) at (0, 0.8) {};
\node[draw,circle,fill=black] (0Wire2) at (0, 1.6) {};
\node[draw,circle,fill=black] (0Wire3) at (0, 2.4000000000000004) {};
\node[draw,circle,fill=white] (1Wire0) at (1, 0.8) {};
\node[draw,circle,fill=white] (1Wire1) at (1, 1.6) {};
\node[draw,circle,fill=white] (2Wire0) at (2, 0.0) {};
\node[draw,circle,fill=white] (2Wire1) at (2, 0.8) {};
\node[draw,circle,fill=white] (2Wire2) at (2, 1.6) {};
\node[draw,circle,fill=white] (2Wire3) at (2, 2.4000000000000004) {};
\node[draw,circle,fill=white] (3Wire0) at (3, 0.8) {};
\node[draw,circle,fill=white] (3Wire1) at (3, 1.6) {};
\node[draw,circle,fill=white] (4Wire0) at (4, 0.0) {};
\node[draw,circle,fill=white] (4Wire1) at (4, 0.8) {};
\node[draw,circle,fill=white] (4Wire2) at (4, 1.6) {};
\node[draw,circle,fill=white] (4Wire3) at (4, 2.4000000000000004) {};
\draw (0Wire0) -- (0Wire0);
\draw (0Wire0) -- (0Wire0);
\draw (0Wire0) -- (1Wire0);
\draw (0Wire1) -- (0Wire1);
\draw (0Wire1) -- (0Wire1);
\draw (0Wire1) -- (1Wire0);
\draw (0Wire2) -- (0Wire2);
\draw (0Wire2) -- (0Wire2);
\draw (0Wire2) -- (1Wire1);
\draw (0Wire3) -- (0Wire3);
\draw (0Wire3) -- (0Wire3);
\draw (0Wire3) -- (1Wire1);
\draw (1Wire0) -- (0Wire0);
\draw (1Wire0) -- (0Wire1);
\draw (1Wire0) -- (2Wire0);
\draw (1Wire0) -- (2Wire1);
\draw (1Wire0) -- (2Wire2);
\draw (1Wire0) -- (2Wire3);
\draw (1Wire1) -- (0Wire2);
\draw (1Wire1) -- (0Wire3);
\draw (1Wire1) -- (2Wire0);
\draw (1Wire1) -- (2Wire1);
\draw (1Wire1) -- (2Wire2);
\draw (1Wire1) -- (2Wire3);
\draw (2Wire0) -- (1Wire0);
\draw (2Wire0) -- (1Wire1);
\draw (2Wire0) -- (3Wire0);
\draw (2Wire1) -- (1Wire0);
\draw (2Wire1) -- (1Wire1);
\draw (2Wire1) -- (3Wire0);
\draw (2Wire2) -- (1Wire0);
\draw (2Wire2) -- (1Wire1);
\draw (2Wire2) -- (3Wire1);
\draw (2Wire3) -- (1Wire0);
\draw (2Wire3) -- (1Wire1);
\draw (2Wire3) -- (3Wire1);
\draw (3Wire0) -- (2Wire0);
\draw (3Wire0) -- (2Wire1);
\draw (3Wire0) -- (4Wire0);
\draw (3Wire0) -- (4Wire1);
\draw (3Wire0) -- (4Wire2);
\draw (3Wire0) -- (4Wire3);
\draw (3Wire1) -- (2Wire2);
\draw (3Wire1) -- (2Wire3);
\draw (3Wire1) -- (4Wire0);
\draw (3Wire1) -- (4Wire1);
\draw (3Wire1) -- (4Wire2);
\draw (3Wire1) -- (4Wire3);
\draw (4Wire0) -- (3Wire0);
\draw (4Wire0) -- (3Wire1);
\draw (4Wire1) -- (3Wire0);
\draw (4Wire1) -- (3Wire1);
\draw (4Wire2) -- (3Wire0);
\draw (4Wire2) -- (3Wire1);
\draw (4Wire3) -- (3Wire0);
\draw (4Wire3) -- (3Wire1);
\end{tikzpicture}
\newline

\begin{tikzpicture}
\node[draw,circle,fill=white] (0Wire0) at (0, 0.0) {};
\node[draw,circle,fill=white] (0Wire1) at (0, 0.8) {};
\node[draw,circle,fill=white] (0Wire2) at (0, 1.6) {};
\node[draw,circle,fill=white] (0Wire3) at (0, 2.4000000000000004) {};
\node[draw,circle,fill=black] (1Wire0) at (1, 0.8) {};
\node[draw,circle,fill=black] (1Wire1) at (1, 1.6) {};
\node[draw,circle,fill=white] (2Wire0) at (2, 0.0) {};
\node[draw,circle,fill=white] (2Wire1) at (2, 0.8) {};
\node[draw,circle,fill=white] (2Wire2) at (2, 1.6) {};
\node[draw,circle,fill=white] (2Wire3) at (2, 2.4000000000000004) {};
\node[draw,circle,fill=white] (3Wire0) at (3, 0.8) {};
\node[draw,circle,fill=white] (3Wire1) at (3, 1.6) {};
\node[draw,circle,fill=white] (4Wire0) at (4, 0.0) {};
\node[draw,circle,fill=white] (4Wire1) at (4, 0.8) {};
\node[draw,circle,fill=white] (4Wire2) at (4, 1.6) {};
\node[draw,circle,fill=white] (4Wire3) at (4, 2.4000000000000004) {};
\draw (0Wire0) -- (0Wire0);
\draw (0Wire0) -- (0Wire0);
\draw (0Wire0) -- (1Wire0);
\draw (0Wire1) -- (0Wire1);
\draw (0Wire1) -- (0Wire1);
\draw (0Wire1) -- (1Wire0);
\draw (0Wire2) -- (0Wire2);
\draw (0Wire2) -- (0Wire2);
\draw (0Wire2) -- (1Wire1);
\draw (0Wire3) -- (0Wire3);
\draw (0Wire3) -- (0Wire3);
\draw (0Wire3) -- (1Wire1);
\draw (1Wire0) -- (0Wire0);
\draw (1Wire0) -- (0Wire1);
\draw (1Wire0) -- (2Wire0);
\draw (1Wire0) -- (2Wire1);
\draw (1Wire0) -- (2Wire2);
\draw (1Wire0) -- (2Wire3);
\draw (1Wire1) -- (0Wire2);
\draw (1Wire1) -- (0Wire3);
\draw (1Wire1) -- (2Wire0);
\draw (1Wire1) -- (2Wire1);
\draw (1Wire1) -- (2Wire2);
\draw (1Wire1) -- (2Wire3);
\draw (2Wire0) -- (1Wire0);
\draw (2Wire0) -- (1Wire1);
\draw (2Wire0) -- (3Wire0);
\draw (2Wire1) -- (1Wire0);
\draw (2Wire1) -- (1Wire1);
\draw (2Wire1) -- (3Wire0);
\draw (2Wire2) -- (1Wire0);
\draw (2Wire2) -- (1Wire1);
\draw (2Wire2) -- (3Wire1);
\draw (2Wire3) -- (1Wire0);
\draw (2Wire3) -- (1Wire1);
\draw (2Wire3) -- (3Wire1);
\draw (3Wire0) -- (2Wire0);
\draw (3Wire0) -- (2Wire1);
\draw (3Wire0) -- (4Wire0);
\draw (3Wire0) -- (4Wire1);
\draw (3Wire0) -- (4Wire2);
\draw (3Wire0) -- (4Wire3);
\draw (3Wire1) -- (2Wire2);
\draw (3Wire1) -- (2Wire3);
\draw (3Wire1) -- (4Wire0);
\draw (3Wire1) -- (4Wire1);
\draw (3Wire1) -- (4Wire2);
\draw (3Wire1) -- (4Wire3);
\draw (4Wire0) -- (3Wire0);
\draw (4Wire0) -- (3Wire1);
\draw (4Wire1) -- (3Wire0);
\draw (4Wire1) -- (3Wire1);
\draw (4Wire2) -- (3Wire0);
\draw (4Wire2) -- (3Wire1);
\draw (4Wire3) -- (3Wire0);
\draw (4Wire3) -- (3Wire1);
\end{tikzpicture}
\newline

\begin{tikzpicture}
\node[draw,circle,fill=white] (0Wire0) at (0, 0.0) {};
\node[draw,circle,fill=white] (0Wire1) at (0, 0.8) {};
\node[draw,circle,fill=white] (0Wire2) at (0, 1.6) {};
\node[draw,circle,fill=white] (0Wire3) at (0, 2.4000000000000004) {};
\node[draw,circle,fill=black] (1Wire0) at (1, 0.8) {};
\node[draw,circle,fill=black] (1Wire1) at (1, 1.6) {};
\node[draw,circle,fill=black] (2Wire0) at (2, 0.0) {};
\node[draw,circle,fill=black] (2Wire1) at (2, 0.8) {};
\node[draw,circle,fill=black] (2Wire2) at (2, 1.6) {};
\node[draw,circle,fill=black] (2Wire3) at (2, 2.4000000000000004) {};
\node[draw,circle,fill=white] (3Wire0) at (3, 0.8) {};
\node[draw,circle,fill=white] (3Wire1) at (3, 1.6) {};
\node[draw,circle,fill=white] (4Wire0) at (4, 0.0) {};
\node[draw,circle,fill=white] (4Wire1) at (4, 0.8) {};
\node[draw,circle,fill=white] (4Wire2) at (4, 1.6) {};
\node[draw,circle,fill=white] (4Wire3) at (4, 2.4000000000000004) {};
\draw (0Wire0) -- (0Wire0);
\draw (0Wire0) -- (0Wire0);
\draw (0Wire0) -- (1Wire0);
\draw (0Wire1) -- (0Wire1);
\draw (0Wire1) -- (0Wire1);
\draw (0Wire1) -- (1Wire0);
\draw (0Wire2) -- (0Wire2);
\draw (0Wire2) -- (0Wire2);
\draw (0Wire2) -- (1Wire1);
\draw (0Wire3) -- (0Wire3);
\draw (0Wire3) -- (0Wire3);
\draw (0Wire3) -- (1Wire1);
\draw (1Wire0) -- (0Wire0);
\draw (1Wire0) -- (0Wire1);
\draw (1Wire0) -- (2Wire0);
\draw (1Wire0) -- (2Wire1);
\draw (1Wire0) -- (2Wire2);
\draw (1Wire0) -- (2Wire3);
\draw (1Wire1) -- (0Wire2);
\draw (1Wire1) -- (0Wire3);
\draw (1Wire1) -- (2Wire0);
\draw (1Wire1) -- (2Wire1);
\draw (1Wire1) -- (2Wire2);
\draw (1Wire1) -- (2Wire3);
\draw (2Wire0) -- (1Wire0);
\draw (2Wire0) -- (1Wire1);
\draw (2Wire0) -- (3Wire0);
\draw (2Wire1) -- (1Wire0);
\draw (2Wire1) -- (1Wire1);
\draw (2Wire1) -- (3Wire0);
\draw (2Wire2) -- (1Wire0);
\draw (2Wire2) -- (1Wire1);
\draw (2Wire2) -- (3Wire1);
\draw (2Wire3) -- (1Wire0);
\draw (2Wire3) -- (1Wire1);
\draw (2Wire3) -- (3Wire1);
\draw (3Wire0) -- (2Wire0);
\draw (3Wire0) -- (2Wire1);
\draw (3Wire0) -- (4Wire0);
\draw (3Wire0) -- (4Wire1);
\draw (3Wire0) -- (4Wire2);
\draw (3Wire0) -- (4Wire3);
\draw (3Wire1) -- (2Wire2);
\draw (3Wire1) -- (2Wire3);
\draw (3Wire1) -- (4Wire0);
\draw (3Wire1) -- (4Wire1);
\draw (3Wire1) -- (4Wire2);
\draw (3Wire1) -- (4Wire3);
\draw (4Wire0) -- (3Wire0);
\draw (4Wire0) -- (3Wire1);
\draw (4Wire1) -- (3Wire0);
\draw (4Wire1) -- (3Wire1);
\draw (4Wire2) -- (3Wire0);
\draw (4Wire2) -- (3Wire1);
\draw (4Wire3) -- (3Wire0);
\draw (4Wire3) -- (3Wire1);
\end{tikzpicture}
\newline

\begin{tikzpicture}
\node[draw,circle,fill=white] (0Wire0) at (0, 0.0) {};
\node[draw,circle,fill=white] (0Wire1) at (0, 0.8) {};
\node[draw,circle,fill=white] (0Wire2) at (0, 1.6) {};
\node[draw,circle,fill=white] (0Wire3) at (0, 2.4000000000000004) {};
\node[draw,circle,fill=white] (1Wire0) at (1, 0.8) {};
\node[draw,circle,fill=white] (1Wire1) at (1, 1.6) {};
\node[draw,circle,fill=white] (2Wire0) at (2, 0.0) {};
\node[draw,circle,fill=white] (2Wire1) at (2, 0.8) {};
\node[draw,circle,fill=white] (2Wire2) at (2, 1.6) {};
\node[draw,circle,fill=white] (2Wire3) at (2, 2.4000000000000004) {};
\node[draw,circle,fill=black] (3Wire0) at (3, 0.8) {};
\node[draw,circle,fill=black] (3Wire1) at (3, 1.6) {};
\node[draw,circle,fill=white] (4Wire0) at (4, 0.0) {};
\node[draw,circle,fill=white] (4Wire1) at (4, 0.8) {};
\node[draw,circle,fill=white] (4Wire2) at (4, 1.6) {};
\node[draw,circle,fill=white] (4Wire3) at (4, 2.4000000000000004) {};
\draw (0Wire0) -- (0Wire0);
\draw (0Wire0) -- (0Wire0);
\draw (0Wire0) -- (1Wire0);
\draw (0Wire1) -- (0Wire1);
\draw (0Wire1) -- (0Wire1);
\draw (0Wire1) -- (1Wire0);
\draw (0Wire2) -- (0Wire2);
\draw (0Wire2) -- (0Wire2);
\draw (0Wire2) -- (1Wire1);
\draw (0Wire3) -- (0Wire3);
\draw (0Wire3) -- (0Wire3);
\draw (0Wire3) -- (1Wire1);
\draw (1Wire0) -- (0Wire0);
\draw (1Wire0) -- (0Wire1);
\draw (1Wire0) -- (2Wire0);
\draw (1Wire0) -- (2Wire1);
\draw (1Wire0) -- (2Wire2);
\draw (1Wire0) -- (2Wire3);
\draw (1Wire1) -- (0Wire2);
\draw (1Wire1) -- (0Wire3);
\draw (1Wire1) -- (2Wire0);
\draw (1Wire1) -- (2Wire1);
\draw (1Wire1) -- (2Wire2);
\draw (1Wire1) -- (2Wire3);
\draw (2Wire0) -- (1Wire0);
\draw (2Wire0) -- (1Wire1);
\draw (2Wire0) -- (3Wire0);
\draw (2Wire1) -- (1Wire0);
\draw (2Wire1) -- (1Wire1);
\draw (2Wire1) -- (3Wire0);
\draw (2Wire2) -- (1Wire0);
\draw (2Wire2) -- (1Wire1);
\draw (2Wire2) -- (3Wire1);
\draw (2Wire3) -- (1Wire0);
\draw (2Wire3) -- (1Wire1);
\draw (2Wire3) -- (3Wire1);
\draw (3Wire0) -- (2Wire0);
\draw (3Wire0) -- (2Wire1);
\draw (3Wire0) -- (4Wire0);
\draw (3Wire0) -- (4Wire1);
\draw (3Wire0) -- (4Wire2);
\draw (3Wire0) -- (4Wire3);
\draw (3Wire1) -- (2Wire2);
\draw (3Wire1) -- (2Wire3);
\draw (3Wire1) -- (4Wire0);
\draw (3Wire1) -- (4Wire1);
\draw (3Wire1) -- (4Wire2);
\draw (3Wire1) -- (4Wire3);
\draw (4Wire0) -- (3Wire0);
\draw (4Wire0) -- (3Wire1);
\draw (4Wire1) -- (3Wire0);
\draw (4Wire1) -- (3Wire1);
\draw (4Wire2) -- (3Wire0);
\draw (4Wire2) -- (3Wire1);
\draw (4Wire3) -- (3Wire0);
\draw (4Wire3) -- (3Wire1);
\end{tikzpicture}
\newline

\begin{tikzpicture}
\node[draw,circle,fill=white] (0Wire0) at (0, 0.0) {};
\node[draw,circle,fill=white] (0Wire1) at (0, 0.8) {};
\node[draw,circle,fill=white] (0Wire2) at (0, 1.6) {};
\node[draw,circle,fill=white] (0Wire3) at (0, 2.4000000000000004) {};
\node[draw,circle,fill=white] (1Wire0) at (1, 0.8) {};
\node[draw,circle,fill=white] (1Wire1) at (1, 1.6) {};
\node[draw,circle,fill=white] (2Wire0) at (2, 0.0) {};
\node[draw,circle,fill=white] (2Wire1) at (2, 0.8) {};
\node[draw,circle,fill=white] (2Wire2) at (2, 1.6) {};
\node[draw,circle,fill=white] (2Wire3) at (2, 2.4000000000000004) {};
\node[draw,circle,fill=black] (3Wire0) at (3, 0.8) {};
\node[draw,circle,fill=black] (3Wire1) at (3, 1.6) {};
\node[draw,circle,fill=black] (4Wire0) at (4, 0.0) {};
\node[draw,circle,fill=black] (4Wire1) at (4, 0.8) {};
\node[draw,circle,fill=black] (4Wire2) at (4, 1.6) {};
\node[draw,circle,fill=black] (4Wire3) at (4, 2.4000000000000004) {};
\draw (0Wire0) -- (0Wire0);
\draw (0Wire0) -- (0Wire0);
\draw (0Wire0) -- (1Wire0);
\draw (0Wire1) -- (0Wire1);
\draw (0Wire1) -- (0Wire1);
\draw (0Wire1) -- (1Wire0);
\draw (0Wire2) -- (0Wire2);
\draw (0Wire2) -- (0Wire2);
\draw (0Wire2) -- (1Wire1);
\draw (0Wire3) -- (0Wire3);
\draw (0Wire3) -- (0Wire3);
\draw (0Wire3) -- (1Wire1);
\draw (1Wire0) -- (0Wire0);
\draw (1Wire0) -- (0Wire1);
\draw (1Wire0) -- (2Wire0);
\draw (1Wire0) -- (2Wire1);
\draw (1Wire0) -- (2Wire2);
\draw (1Wire0) -- (2Wire3);
\draw (1Wire1) -- (0Wire2);
\draw (1Wire1) -- (0Wire3);
\draw (1Wire1) -- (2Wire0);
\draw (1Wire1) -- (2Wire1);
\draw (1Wire1) -- (2Wire2);
\draw (1Wire1) -- (2Wire3);
\draw (2Wire0) -- (1Wire0);
\draw (2Wire0) -- (1Wire1);
\draw (2Wire0) -- (3Wire0);
\draw (2Wire1) -- (1Wire0);
\draw (2Wire1) -- (1Wire1);
\draw (2Wire1) -- (3Wire0);
\draw (2Wire2) -- (1Wire0);
\draw (2Wire2) -- (1Wire1);
\draw (2Wire2) -- (3Wire1);
\draw (2Wire3) -- (1Wire0);
\draw (2Wire3) -- (1Wire1);
\draw (2Wire3) -- (3Wire1);
\draw (3Wire0) -- (2Wire0);
\draw (3Wire0) -- (2Wire1);
\draw (3Wire0) -- (4Wire0);
\draw (3Wire0) -- (4Wire1);
\draw (3Wire0) -- (4Wire2);
\draw (3Wire0) -- (4Wire3);
\draw (3Wire1) -- (2Wire2);
\draw (3Wire1) -- (2Wire3);
\draw (3Wire1) -- (4Wire0);
\draw (3Wire1) -- (4Wire1);
\draw (3Wire1) -- (4Wire2);
\draw (3Wire1) -- (4Wire3);
\draw (4Wire0) -- (3Wire0);
\draw (4Wire0) -- (3Wire1);
\draw (4Wire1) -- (3Wire0);
\draw (4Wire1) -- (3Wire1);
\draw (4Wire2) -- (3Wire0);
\draw (4Wire2) -- (3Wire1);
\draw (4Wire3) -- (3Wire0);
\draw (4Wire3) -- (3Wire1);
\end{tikzpicture}
\newline

\end{center}
\end{minipage}\hfill\break
            }
            \end{center}
            \caption{ Signal going throught two connected wires (We suppose its left end is output of some gadget).}\label{livewire}
  \end{minipage}
  \hspace{0.1\textwidth}
  \begin{minipage}{0.45\textwidth}
    
            %\begin{figure}
            \begin{center}
            \scalebox{0.5}{
            
\begin{tikzpicture}
\node (Andanno1) at (-1.3, 5.2) {  \LARGE $I_1$};
\node[scale=5,color=black!30!white] (Andanno1) at (-0.4, 5.2) {  \LARGE \textbraceleft };
\node (Andanno2) at (-1.3, 1.2) {  \LARGE $I_2$};
\node[scale=5,color=black!30!white] (Andanno2) at (-0.4, 1.2) {  \LARGE \textbraceleft };
\node (Andanno3) at (5.45, 3.2) {  \LARGE $O_1$};
\node[scale=5,color=black!30!white] (Andanno3) at (4.4, 3.2) {  \LARGE \textbraceright };
\node[draw,circle,fill=white] (0Andcror0) at (0, 0.0) {};
\node[draw,circle,fill=white] (0Andcror1) at (0, 0.8) {};
\node[draw,circle,fill=white] (0Andcror2) at (0, 1.6) {};
\node[draw,circle,fill=white] (0Andcror3) at (0, 2.4000000000000004) {};
\node[draw,circle,fill=white] (1Andcror0) at (1, 0.8) {};
\node[draw,circle,fill=white] (1Andcror1) at (1, 1.6) {};
\node[draw,circle,fill=white] (0Andcrora0) at (0, 4.0) {};
\node[draw,circle,fill=white] (0Andcrora1) at (0, 4.8) {};
\node[draw,circle,fill=white] (0Andcrora2) at (0, 5.6) {};
\node[draw,circle,fill=white] (0Andcrora3) at (0, 6.4) {};
\node[draw,circle,fill=white] (1Andcrora0) at (1, 4.8) {};
\node[draw,circle,fill=white] (1Andcrora1) at (1, 5.6) {};
\node[draw,circle,fill=white] (0ao0) at (2, 2.0) {};
\node[draw,circle,fill=white] (0ao1) at (2, 2.8) {};
\node[draw,circle,fill=white] (0ao2) at (2, 3.6) {};
\node[draw,circle,fill=white] (0ao3) at (2, 4.4) {};
\node[draw,circle,fill=white] (1ao0) at (3, 2.8) {};
\node[draw,circle,fill=white] (1ao1) at (3, 3.6) {};
\node[draw,circle,fill=white] (2ao0) at (4, 2.0) {};
\node[draw,circle,fill=white] (2ao1) at (4, 2.8) {};
\node[draw,circle,fill=white] (2ao2) at (4, 3.6) {};
\node[draw,circle,fill=white] (2ao3) at (4, 4.4) {};
\draw (0Andcror0) -- (0Andcror0);
\draw (0Andcror0) -- (0Andcror0);
\draw (0Andcror0) -- (1Andcror0);
\draw (0Andcror1) -- (0Andcror1);
\draw (0Andcror1) -- (0Andcror1);
\draw (0Andcror1) -- (1Andcror0);
\draw (0Andcror2) -- (0Andcror2);
\draw (0Andcror2) -- (0Andcror2);
\draw (0Andcror2) -- (1Andcror1);
\draw (0Andcror3) -- (0Andcror3);
\draw (0Andcror3) -- (0Andcror3);
\draw (0Andcror3) -- (1Andcror1);
\draw (1Andcror0) -- (0Andcror0);
\draw (1Andcror0) -- (0Andcror1);
\draw (1Andcror0) -- (0ao0);
\draw (1Andcror0) -- (0ao2);
\draw (1Andcror1) -- (0Andcror2);
\draw (1Andcror1) -- (0Andcror3);
\draw (1Andcror1) -- (0ao0);
\draw (1Andcror1) -- (0ao2);
\draw (0Andcrora0) -- (0Andcrora0);
\draw (0Andcrora0) -- (0Andcrora0);
\draw (0Andcrora0) -- (1Andcrora0);
\draw (0Andcrora1) -- (0Andcrora1);
\draw (0Andcrora1) -- (0Andcrora1);
\draw (0Andcrora1) -- (1Andcrora0);
\draw (0Andcrora2) -- (0Andcrora2);
\draw (0Andcrora2) -- (0Andcrora2);
\draw (0Andcrora2) -- (1Andcrora1);
\draw (0Andcrora3) -- (0Andcrora3);
\draw (0Andcrora3) -- (0Andcrora3);
\draw (0Andcrora3) -- (1Andcrora1);
\draw (1Andcrora0) -- (0Andcrora0);
\draw (1Andcrora0) -- (0Andcrora1);
\draw (1Andcrora0) -- (0ao1);
\draw (1Andcrora0) -- (0ao3);
\draw (1Andcrora1) -- (0Andcrora2);
\draw (1Andcrora1) -- (0Andcrora3);
\draw (1Andcrora1) -- (0ao1);
\draw (1Andcrora1) -- (0ao3);
\draw (0ao0) -- (0ao0);
\draw (0ao0) -- (0ao0);
\draw (0ao0) -- (1ao0);
\draw (0ao0) -- (1Andcror0);
\draw (0ao0) -- (1Andcror1);
\draw (0ao1) -- (0ao1);
\draw (0ao1) -- (0ao1);
\draw (0ao1) -- (1ao0);
\draw (0ao1) -- (1Andcrora0);
\draw (0ao1) -- (1Andcrora1);
\draw (0ao2) -- (0ao2);
\draw (0ao2) -- (0ao2);
\draw (0ao2) -- (1ao1);
\draw (0ao2) -- (1Andcror0);
\draw (0ao2) -- (1Andcror1);
\draw (0ao3) -- (0ao3);
\draw (0ao3) -- (0ao3);
\draw (0ao3) -- (1ao1);
\draw (0ao3) -- (1Andcrora0);
\draw (0ao3) -- (1Andcrora1);
\draw (1ao0) -- (0ao0);
\draw (1ao0) -- (0ao1);
\draw (1ao0) -- (2ao0);
\draw (1ao0) -- (2ao1);
\draw (1ao0) -- (2ao2);
\draw (1ao0) -- (2ao3);
\draw (1ao1) -- (0ao2);
\draw (1ao1) -- (0ao3);
\draw (1ao1) -- (2ao0);
\draw (1ao1) -- (2ao1);
\draw (1ao1) -- (2ao2);
\draw (1ao1) -- (2ao3);
\draw (2ao0) -- (1ao0);
\draw (2ao0) -- (1ao1);
\draw (2ao1) -- (1ao0);
\draw (2ao1) -- (1ao1);
\draw (2ao2) -- (1ao0);
\draw (2ao2) -- (1ao1);
\draw (2ao3) -- (1ao0);
\draw (2ao3) -- (1ao1);
\end{tikzpicture}

            }
            \end{center}
            \caption{ Gate computing AND.}\label{andpicture}
            %\end{figure}

            %\begin{figure}
            \begin{center}
            \scalebox{0.5}{
            
\begin{tikzpicture}
\node (Notanno1) at (-1.3, 4.2) {  \LARGE $I_1$};
\node[scale=5,color=black!30!white] (Notanno1) at (-0.4, 4.2) {  \LARGE \textbraceleft };
\node (Notanno2) at (-1.3, 0.2) {  \LARGE $I_2$};
\node[scale=5,color=black!30!white] (Notanno2) at (-0.4, 0.2) {  \LARGE \textbraceleft };
\node (Notanno3) at (5.8, 3.7) {  \LARGE $O_1$};
\node[scale=5,color=black!30!white] (Notanno3) at (4.75, 3.7) {  \LARGE \textbraceright };
\node[draw,circle,fill=white] (0NotsplitterSUlev1) at (2, 2.0) {};
\node[draw,circle,fill=white] (1NotsplitterSUlev1) at (2, 2.5) {};
\node[draw,circle,fill=white] (2NotsplitterSUlev1) at (2, 3.0) {};
\node[draw,circle,fill=white] (3NotsplitterSUlev1) at (2, 3.5) {};
\node[draw,circle,fill=white] (0NotsplitterSUlev2) at (2, 4.5) {};
\node[draw,circle,fill=white] (1NotsplitterSUlev2) at (2, 5.0) {};
\node[draw,circle,fill=white] (2NotsplitterSUlev2) at (2, 5.5) {};
\node[draw,circle,fill=white] (3NotsplitterSUlev2) at (2, 6.0) {};
\node[draw,circle,fill=white] (0NotsplitterSUw0) at (0, 3.0) {};
\node[draw,circle,fill=white] (0NotsplitterSUw1) at (0, 3.8) {};
\node[draw,circle,fill=white] (0NotsplitterSUw2) at (0, 4.6) {};
\node[draw,circle,fill=white] (0NotsplitterSUw3) at (0, 5.4) {};
\node[draw,circle,fill=white] (1NotsplitterSUw0) at (1, 3.8) {};
\node[draw,circle,fill=white] (1NotsplitterSUw1) at (1, 4.6) {};
\node[draw,circle,fill=white] (0NotSUfw0) at (3, 1.8) {};
\node[draw,circle,fill=white] (0NotSUfw1) at (3, 2.6) {};
\node[draw,circle,fill=white] (0NotSUsw0) at (3, 3.8) {};
\node[draw,circle,fill=white] (0NotSUsw1) at (3, 4.6) {};
\node[draw,circle,fill=white] (0Notsulow0) at (0, -1.0) {};
\node[draw,circle,fill=white] (0Notsulow1) at (0, -0.19999999999999996) {};
\node[draw,circle,fill=white] (0Notsulow2) at (0, 0.6000000000000001) {};
\node[draw,circle,fill=white] (0Notsulow3) at (0, 1.4000000000000004) {};
\node[draw,circle,fill=white] (1Notsulow0) at (1, -0.19999999999999996) {};
\node[draw,circle,fill=white] (1Notsulow1) at (1, 0.6000000000000001) {};
\node[draw,circle,fill=white] (2Notsulow0) at (2, -1.0) {};
\node[draw,circle,fill=white] (2Notsulow1) at (2, -0.19999999999999996) {};
\node[draw,circle,fill=white] (2Notsulow2) at (2, 0.6000000000000001) {};
\node[draw,circle,fill=white] (2Notsulow3) at (2, 1.4000000000000004) {};
\node[draw,circle,fill=white] (3Notsulow0) at (3, -0.19999999999999996) {};
\node[draw,circle,fill=white] (3Notsulow1) at (3, 0.6000000000000001) {};
\node[draw,circle,fill=white] (0Notsulast0) at (4.3, 2.5) {};
\node[draw,circle,fill=white] (0Notsulast1) at (4.3, 3.3) {};
\node[draw,circle,fill=white] (0Notsulast2) at (4.3, 4.1) {};
\node[draw,circle,fill=white] (0Notsulast3) at (4.3, 4.9) {};
\node[draw,circle,fill=white] (0Notsubalance0) at (4, -1.0) {};
\node[draw,circle,fill=white] (0Notsubalance1) at (4, -0.19999999999999996) {};
\node[draw,circle,fill=white] (0Notsubalance2) at (4, 0.6000000000000001) {};
\node[draw,circle,fill=white] (0Notsubalance3) at (4, 1.4000000000000004) {};
\draw (0NotsplitterSUlev1) -- (1NotsplitterSUw0);
\draw (0NotsplitterSUlev1) -- (1NotsplitterSUw1);
\draw (0NotsplitterSUlev1) -- (0NotSUfw0);
\draw (1NotsplitterSUlev1) -- (1NotsplitterSUw0);
\draw (1NotsplitterSUlev1) -- (1NotsplitterSUw1);
\draw (1NotsplitterSUlev1) -- (0NotSUfw0);
\draw (2NotsplitterSUlev1) -- (1NotsplitterSUw0);
\draw (2NotsplitterSUlev1) -- (1NotsplitterSUw1);
\draw (2NotsplitterSUlev1) -- (0NotSUfw1);
\draw (3NotsplitterSUlev1) -- (1NotsplitterSUw0);
\draw (3NotsplitterSUlev1) -- (1NotsplitterSUw1);
\draw (3NotsplitterSUlev1) -- (0NotSUfw1);
\draw (0NotsplitterSUlev2) -- (1NotsplitterSUw0);
\draw (0NotsplitterSUlev2) -- (1NotsplitterSUw1);
\draw (0NotsplitterSUlev2) -- (0NotSUsw0);
\draw (1NotsplitterSUlev2) -- (1NotsplitterSUw0);
\draw (1NotsplitterSUlev2) -- (1NotsplitterSUw1);
\draw (1NotsplitterSUlev2) -- (0NotSUsw0);
\draw (2NotsplitterSUlev2) -- (1NotsplitterSUw0);
\draw (2NotsplitterSUlev2) -- (1NotsplitterSUw1);
\draw (2NotsplitterSUlev2) -- (0NotSUsw1);
\draw (3NotsplitterSUlev2) -- (1NotsplitterSUw0);
\draw (3NotsplitterSUlev2) -- (1NotsplitterSUw1);
\draw (3NotsplitterSUlev2) -- (0NotSUsw1);
\draw (0NotsplitterSUw0) -- (0NotsplitterSUw0);
\draw (0NotsplitterSUw0) -- (0NotsplitterSUw0);
\draw (0NotsplitterSUw0) -- (1NotsplitterSUw0);
\draw (0NotsplitterSUw1) -- (0NotsplitterSUw1);
\draw (0NotsplitterSUw1) -- (0NotsplitterSUw1);
\draw (0NotsplitterSUw1) -- (1NotsplitterSUw0);
\draw (0NotsplitterSUw2) -- (0NotsplitterSUw2);
\draw (0NotsplitterSUw2) -- (0NotsplitterSUw2);
\draw (0NotsplitterSUw2) -- (1NotsplitterSUw1);
\draw (0NotsplitterSUw3) -- (0NotsplitterSUw3);
\draw (0NotsplitterSUw3) -- (0NotsplitterSUw3);
\draw (0NotsplitterSUw3) -- (1NotsplitterSUw1);
\draw (1NotsplitterSUw0) -- (0NotsplitterSUw0);
\draw (1NotsplitterSUw0) -- (0NotsplitterSUw1);
\draw (1NotsplitterSUw0) -- (0NotsplitterSUlev1);
\draw (1NotsplitterSUw0) -- (1NotsplitterSUlev1);
\draw (1NotsplitterSUw0) -- (2NotsplitterSUlev1);
\draw (1NotsplitterSUw0) -- (3NotsplitterSUlev1);
\draw (1NotsplitterSUw0) -- (0NotsplitterSUlev2);
\draw (1NotsplitterSUw0) -- (1NotsplitterSUlev2);
\draw (1NotsplitterSUw0) -- (2NotsplitterSUlev2);
\draw (1NotsplitterSUw0) -- (3NotsplitterSUlev2);
\draw (1NotsplitterSUw1) -- (0NotsplitterSUw2);
\draw (1NotsplitterSUw1) -- (0NotsplitterSUw3);
\draw (1NotsplitterSUw1) -- (0NotsplitterSUlev1);
\draw (1NotsplitterSUw1) -- (1NotsplitterSUlev1);
\draw (1NotsplitterSUw1) -- (2NotsplitterSUlev1);
\draw (1NotsplitterSUw1) -- (3NotsplitterSUlev1);
\draw (1NotsplitterSUw1) -- (0NotsplitterSUlev2);
\draw (1NotsplitterSUw1) -- (1NotsplitterSUlev2);
\draw (1NotsplitterSUw1) -- (2NotsplitterSUlev2);
\draw (1NotsplitterSUw1) -- (3NotsplitterSUlev2);
\draw (0NotSUfw0) -- (0NotSUfw0);
\draw (0NotSUfw0) -- (0NotSUfw0);
\draw (0NotSUfw0) -- (0NotsplitterSUlev1);
\draw (0NotSUfw0) -- (1NotsplitterSUlev1);
\draw (0NotSUfw0) -- (0Notsulast0);
\draw (0NotSUfw1) -- (0NotSUfw1);
\draw (0NotSUfw1) -- (0NotSUfw1);
\draw (0NotSUfw1) -- (2NotsplitterSUlev1);
\draw (0NotSUfw1) -- (3NotsplitterSUlev1);
\draw (0NotSUfw1) -- (0Notsulast1);
\draw (0NotSUsw0) -- (0NotSUsw0);
\draw (0NotSUsw0) -- (0NotSUsw0);
\draw (0NotSUsw0) -- (0NotsplitterSUlev2);
\draw (0NotSUsw0) -- (1NotsplitterSUlev2);
\draw (0NotSUsw0) -- (0Notsulast2);
\draw (0NotSUsw1) -- (0NotSUsw1);
\draw (0NotSUsw1) -- (0NotSUsw1);
\draw (0NotSUsw1) -- (2NotsplitterSUlev2);
\draw (0NotSUsw1) -- (3NotsplitterSUlev2);
\draw (0NotSUsw1) -- (0Notsulast3);
\draw (0Notsulow0) -- (0Notsulow0);
\draw (0Notsulow0) -- (0Notsulow0);
\draw (0Notsulow0) -- (1Notsulow0);
\draw (0Notsulow1) -- (0Notsulow1);
\draw (0Notsulow1) -- (0Notsulow1);
\draw (0Notsulow1) -- (1Notsulow0);
\draw (0Notsulow2) -- (0Notsulow2);
\draw (0Notsulow2) -- (0Notsulow2);
\draw (0Notsulow2) -- (1Notsulow1);
\draw (0Notsulow3) -- (0Notsulow3);
\draw (0Notsulow3) -- (0Notsulow3);
\draw (0Notsulow3) -- (1Notsulow1);
\draw (1Notsulow0) -- (0Notsulow0);
\draw (1Notsulow0) -- (0Notsulow1);
\draw (1Notsulow0) -- (2Notsulow0);
\draw (1Notsulow0) -- (2Notsulow1);
\draw (1Notsulow0) -- (2Notsulow2);
\draw (1Notsulow0) -- (2Notsulow3);
\draw (1Notsulow1) -- (0Notsulow2);
\draw (1Notsulow1) -- (0Notsulow3);
\draw (1Notsulow1) -- (2Notsulow0);
\draw (1Notsulow1) -- (2Notsulow1);
\draw (1Notsulow1) -- (2Notsulow2);
\draw (1Notsulow1) -- (2Notsulow3);
\draw (2Notsulow0) -- (1Notsulow0);
\draw (2Notsulow0) -- (1Notsulow1);
\draw (2Notsulow0) -- (3Notsulow0);
\draw (2Notsulow1) -- (1Notsulow0);
\draw (2Notsulow1) -- (1Notsulow1);
\draw (2Notsulow1) -- (3Notsulow0);
\draw (2Notsulow2) -- (1Notsulow0);
\draw (2Notsulow2) -- (1Notsulow1);
\draw (2Notsulow2) -- (3Notsulow1);
\draw (2Notsulow3) -- (1Notsulow0);
\draw (2Notsulow3) -- (1Notsulow1);
\draw (2Notsulow3) -- (3Notsulow1);
\draw (3Notsulow0) -- (2Notsulow0);
\draw (3Notsulow0) -- (2Notsulow1);
\draw (3Notsulow0) -- (0Notsulast0);
\draw (3Notsulow0) -- (0Notsulast1);
\draw (3Notsulow0) -- (0Notsubalance0);
\draw (3Notsulow0) -- (0Notsubalance1);
\draw (3Notsulow0) -- (0Notsubalance2);
\draw (3Notsulow0) -- (0Notsubalance3);
\draw (3Notsulow1) -- (2Notsulow2);
\draw (3Notsulow1) -- (2Notsulow3);
\draw (3Notsulow1) -- (0Notsulast2);
\draw (3Notsulow1) -- (0Notsulast3);
\draw (3Notsulow1) -- (0Notsubalance0);
\draw (3Notsulow1) -- (0Notsubalance1);
\draw (3Notsulow1) -- (0Notsubalance2);
\draw (3Notsulow1) -- (0Notsubalance3);
\draw (0Notsulast0) -- (0Notsulast0);
\draw (0Notsulast0) -- (0Notsulast0);
\draw (0Notsulast0) -- (3Notsulow0);
\draw (0Notsulast0) -- (0NotSUfw0);
\draw (0Notsulast1) -- (0Notsulast1);
\draw (0Notsulast1) -- (0Notsulast1);
\draw (0Notsulast1) -- (3Notsulow0);
\draw (0Notsulast1) -- (0NotSUfw1);
\draw (0Notsulast2) -- (0Notsulast2);
\draw (0Notsulast2) -- (0Notsulast2);
\draw (0Notsulast2) -- (3Notsulow1);
\draw (0Notsulast2) -- (0NotSUsw0);
\draw (0Notsulast3) -- (0Notsulast3);
\draw (0Notsulast3) -- (0Notsulast3);
\draw (0Notsulast3) -- (3Notsulow1);
\draw (0Notsulast3) -- (0NotSUsw1);
\draw (0Notsubalance0) -- (0Notsubalance0);
\draw (0Notsubalance0) -- (0Notsubalance0);
\draw (0Notsubalance0) -- (3Notsulow0);
\draw (0Notsubalance0) -- (3Notsulow1);
\draw (0Notsubalance1) -- (0Notsubalance1);
\draw (0Notsubalance1) -- (0Notsubalance1);
\draw (0Notsubalance1) -- (3Notsulow0);
\draw (0Notsubalance1) -- (3Notsulow1);
\draw (0Notsubalance2) -- (0Notsubalance2);
\draw (0Notsubalance2) -- (0Notsubalance2);
\draw (0Notsubalance2) -- (3Notsulow0);
\draw (0Notsubalance2) -- (3Notsulow1);
\draw (0Notsubalance3) -- (0Notsubalance3);
\draw (0Notsubalance3) -- (0Notsubalance3);
\draw (0Notsubalance3) -- (3Notsulow0);
\draw (0Notsubalance3) -- (3Notsulow1);
\end{tikzpicture}

            }
            \end{center}
            \caption{ Storage Unit, signal at $I_1$ toggles state of four vertices to the left of $O_1$. Signal at $I_2$ gets to $O_1$ only if these four vertices are alive.}\label{supicture}
            %\end{figure}
  \end{minipage}
\end{figure}

\subsection{Basic gadgets}
We describe the gadgets used in the construction.
Each gadget $g$ has a constant number of inputs $I_1,I_2,\dots,I_c$
and outputs $O_1,O_2,\dots,O_d$ that receive, respectively send, signals.
Each input $I_i$ (output $O_i$)
is composed of four vertices that always share the same state.
We view live cells as true and dead cells as false,
and denote by $I_{i,t} \in \{0,1\}$ ($O_{i,t} \in \{0,1\}$)
the value of the input $I_i$ (output $O_i$) at time $t$.
Each one of our basic gadgets $g$ has an evaluation time $t_g$,
and is determined to realize a function
$f_g: \{0,1\}^c \rightarrow \{0,1\}^d$.
Starting from the \emph{inert} state (i.e, all the vertices are dead),
if the $c$ inputs receive some signal
(and no new parasite signal is received during the next $t_g$ steps of the process),
it computes $f_g$ in $t_g$ steps,
broadcast the result through the $d$ outputs,
and then goes back to the inert state.
We say that $g$ \emph{computes} the function
\[
\begin{array}{lclc}
f_g: & \{0,1\}^c & \rightarrow & \{0,1\}^d,\\
& (I_{1,t},I_{2,t},\dots,I_{c,t})
& \mapsto & (O_{1,t+t_g},O_{2,t+t_g},\dots,O_{d,t+t_g}).
\end{array}
\]
Moreover, for each gadget we suppose that the input is erased
after one step, and in turn the gadget is responsible for erasing its output after one step.
Here are the \textbf{basic gadgets}:
\begin{itemize}
  \item The wire
  transmits a signal.
  It is evaluated in $2$ time steps,
  has one input $I_1$, and one output $O_1$
  satisfying $O_{1,t+2} = I_{1,t}$.
  Several wires can be connected to create a longer wire.
  Figure~\ref{livewire} illustrates the inner workings of the wire.
  \item The splitter
  duplicates a signal.
  It is evaluated in $2$ time steps,
  has one input $I_1$, and two outputs $O_1,O_2$
  satisfying $O_{1,t+2}=I_{1,t}$ and $O_{2,t+2} = I_{1,t}$.
  \item The OR gate
    computes the logical disjunction.
  It is evaluated in $4$ time steps,
  has two inputs $I_1,I_2$,
  and one output $O_1$ satisfying $O_{1,t+4} = I_{1,i} \vee I_{2,i}$.
  \item The AND gate (Figure~\ref{andpicture})
  computes the logical conjunction.
  It is evaluated in $4$ time steps,
  has two inputs $I_1,I_2$,
  and one output $O_1$ satisfying $O_{1,t+4} = I_{1,t} \wedge I_{2,t}$.
  \item The NOT gate
  computes the logical negation.
  It is evaluated in $4$ time steps,
  has two inputs (a clock signal is required to activate the gate),
  and one output $O_1$ satisfying $O_{1,t+4} = \neg I_{2,t} \wedge I_{1,t}$.
\end{itemize}

To create a Turing machine, we use one more gadget:
the storage unit (Figure~\ref{supicture}).
Contrary to the previous gadgets,
it does not necessarily erase itself after use,
and can store one bit of information
that can be sent upon request.
Formally, a storage unit has a state $S \in \{0,1\}$,
two inputs $I_1,I_2$,
and one output $O_1$.
The first input is used to modify the current state:
if $I_{1,t}$ is true,
then the storage unit changes its state in four steps.
The second input is used to make the storage unit
broadcast its current state:
$O_{1,t+4} = S \wedge I_{2,t}$.

Note that every gadget has a fixed number of vertices.

\subsection{Functions}

Our basic gadgets compute the basic logical operators.
We show that combining them yields bigger gadgets that can compute any binary
function, with a small restriction:
it is not possible to produce a positive signal out of a negative signal.
For example, our NOT gate needs a clock signal to be activated.
Therefore, we only consider binary functions that map to $0$
all the tuples starting with a $0$.
The proof, technical but straightforward, can be found in Appendix~\ref{appendix:proofEveryFunction}.
\begin{lemma}\label{lemma:everyFunction}
  Let $c \in \mathbb{N}$.
  For every function $f: \{0,1\}^c \rightarrow \{0,1\}$
  mapping to $0$ every tuple whose first component is $0$,
  we can construct a gadget computing $f$
  that is composed of $\calO(2^c)$ basic gadgets,
  and is evaluated in $\calO(c)$ steps.
\end{lemma}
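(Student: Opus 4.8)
The plan is to prove Lemma~\ref{lemma:everyFunction} by induction on $c$, realizing $f$ through a Shannon expansion, and using the hypothesis on $f$ to manufacture the clock signals that the NOT gate demands. The hypothesis ``every tuple starting with $0$ maps to $0$'' is exactly what makes this possible: it guarantees that whenever the gadget should output $1$, the first input $I_1$ is already carrying a positive signal, so $I_1$ can serve as the activating clock for every negation performed inside the gadget.

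For the base case $c = 1$, the hypothesis forces $f$ to be either the constant $0$ (realized by leaving the output dead) or the identity $x_1 \mapsto x_1$ (realized by a single wire); either way we use $\calO(1)$ basic gadgets and $\calO(1)$ steps. For $c \geq 2$, write $g_b(x_1,\dots,x_{c-1}) = f(x_1,\dots,x_{c-1},b)$ for $b \in \{0,1\}$, and observe that both $g_0$ and $g_1$ again map every $0$-first tuple to $0$, so the induction hypothesis supplies gadgets $G_0,G_1$ computing them. Then I would use the identity
\[
f(x_1,\dots,x_c) \;=\; \bigl(\neg x_c \wedge x_1 \wedge g_0(x_1,\dots,x_{c-1})\bigr) \;\vee\; \bigl(x_c \wedge g_1(x_1,\dots,x_{c-1})\bigr),
\]
which is valid precisely because $g_0 = g_1 = 0$ whenever $x_1 = 0$ (so the extra $\wedge x_1$ in the first disjunct changes nothing). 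The wiring is then: split $x_1$ into three copies and each of $x_2,\dots,x_{c-1}$ and $x_c$ into two copies (using $\calO(c)$ splitters); route one copy-set into $G_0$, another into $G_1$; feed a copy of $x_1$ and a copy of $x_c$ into a NOT gate to obtain $\neg x_c \wedge x_1$; AND the output of $G_0$ with that signal, AND the output of $G_1$ with $x_c$, and OR the two results. All basic gadgets have even evaluation time, so every internal path length is even and short paths (the NOT-gate branch and the delayed copies of $x_c$) can be padded to the evaluation time of $G_0,G_1$ by chains of wire gadgets; matching to that evaluation time $t(c-1)$ costs $\calO(t(c-1)) = \calO(c)$ wires.

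The accounting is then routine. The number $T(c)$ of basic gadgets satisfies $T(c) \leq 2 T(c-1) + \calO(c)$ (two recursive copies, $\calO(c)$ splitters and padding wires, $\calO(1)$ logic gates), and unrolling gives $T(c) = \calO\bigl(\sum_k (c-k)2^k\bigr) = \calO(2^c)$; the evaluation time satisfies $t(c) = t(c-1) + \calO(1)$, hence $t(c) = \calO(c)$.

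The conceptual skeleton above is short; the main obstacle is the low-level verification, which I expect is what fills the appendix: one must check that the composite gadget respects the operating contract of the basic gadgets — starting from the inert state, feeding the $c$ inputs once produces the correct value on $O_1$ after exactly $t(c)$ steps, no parasite signal is ever created, and the whole assembly returns to the inert state. This amounts to propagating the self-erasing and timing conventions of wires, splitters, and the logic gates through the composition and getting every padding length exactly right, so that the two disjuncts reach the final OR gate synchronously and nothing lingers afterwards.
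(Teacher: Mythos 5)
Your proof is correct and follows essentially the same route as the paper's: induction on $c$ via the Shannon expansion $f = (f_0 \wedge \neg x_c) \vee (f_1 \wedge x_c)$, with the extra copy of $x_1$ serving as the clock input of the NOT gate (so that its output is really $\neg x_c \wedge x_1$, harmless precisely because of the hypothesis on $f$), and the same recurrences $T(c) \leq 2T(c-1) + \calO(c)$ and $t(c) = t(c-1) + \calO(1)$. Your explicit remark that $g_0$ and $g_1$ inherit the ``$0$-first maps to $0$'' property, and your attention to padding the short branches with wires for synchronization, are both points the paper leaves implicit.
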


\subsection{Simulating the Turing machine}
We now show how to simulate a Turing machine with a graph following $\overpopulation{2}{1}$.
\begin{lemma}\label{lemma:simulating_t_m}
  Let $T$ be a Turing machine. For every input $u$
  evaluated by $T$ using $C \in \mathbb{N}$ cells of the tape,
  there exists a bounded degree graph $G$ on $\calO(C)$ vertices
  and an initial configuration $c_0$ of $G$
  such that $T$ stops over the input $u$ if and only if updating
  $c_0$ with the overpopulation rule $\overpopulation{2}{1}$
  eventually yields the configuration with only dead vertices
\end{lemma}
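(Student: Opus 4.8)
The plan is to build the graph $G$ by laying out a row of \textbf{blobs}, one per tape cell used by $T$ on input $u$, and wiring each blob so that it simulates the local behavior of the Turing machine. Each blob contains a constant-size collection of storage units (enough to record the tape symbol currently in that cell, plus a flag indicating whether the head is present) together with a combinational circuit built from the basic gadgets of the previous subsections. By Lemma~\ref{lemma:everyFunction}, for any fixed transition function $\delta$ of $T$ we can realize the map that takes (incoming state signal, stored tape symbol) to (new stored tape symbol, outgoing state signal together with a left/right direction bit) as a single gadget of constant size, provided we arrange the encoding so that the all-zero input produces the all-zero output --- which is automatic, since a blob with no incoming head signal should do nothing. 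The head is encoded by exactly which blob currently holds a live "activation" signal: at each macro-step the active blob reads its stored symbol via the $I_2$ query input of its storage units, computes $\delta$, writes back the new symbol via the $I_1$ toggle inputs, and routes the new-state signal either to its left or right neighbor's activation input. Because every basic gadget (including the storage unit) erases its inputs after one step and is responsible for erasing its own outputs, the graph returns to a quiescent state between macro-steps except for the storage units and the single travelling activation signal, so the simulation stays well-formed indefinitely.

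The halting condition is handled by a dedicated \textbf{clearing} sub-circuit. When the active blob's circuit detects that $\delta$ has entered the halting state, instead of emitting a normal activation signal it emits a special "sweep" signal that propagates down the row; as the sweep passes each blob it queries and then toggles each storage unit so as to force its state to $0$, and finally extinguishes itself at the end of the row. Thus $T$ halts on $u$ if and only if the process started from $c_0$ eventually reaches the all-dead configuration. For the converse direction one argues that as long as $T$ has not halted, there is always at least one live vertex in the configuration (the activation signal, or an intermediate signal inside some gadget, or a storage unit holding a $1$ --- in particular the blob encoding the current head position is never fully dead), so the all-dead configuration is reached \emph{only} via the clearing circuit, hence only when $T$ halts. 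The initial configuration $c_0$ is obtained by setting the storage units of the first $|u|$ blobs to encode $u$, all others to encode the blank symbol, and injecting the activation signal into the first blob with the start state of $T$; this is clearly computable in polynomial time, and $G$ has $\calO(C)$ vertices since each blob has constant size.

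The main obstacle, and the place where care is genuinely needed, is \textbf{timing synchronization}: the basic gadgets have different evaluation times ($2$ for wires and splitters, $4$ for the logic gates, $4$ for a storage-unit read or write), and the composite gadget from Lemma~\ref{lemma:everyFunction} takes $\calO(c)$ steps, so the various signals that must meet at an AND gate, or the query results that must feed the write-back, or the activation signal that must hand off cleanly to the neighboring blob, will in general arrive out of phase. The fix is the standard one --- pad every path with extra wire gadgets so that all signals that must interact do so at matching times, and choose a single macro-step length $L$ (a constant, since each blob's circuit is constant size) that is a common multiple accommodating the slowest internal path; one must check that nothing parasitic survives from one macro-step into the next, which follows from the erase-after-one-step discipline imposed on all gadgets. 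A secondary technical point is verifying that the $\overpopulation{2}{1}$ rule really does drive each gadget as claimed in isolation and that wiring gadgets together at their four-vertex input/output interfaces does not create spurious interactions (a vertex must never see an unintended extra live neighbor that pushes it over the threshold $1$ or up to the threshold $2$); this is checked locally, gadget by gadget, using the fixed bounded degree of every vertex. Neither obstacle is conceptually deep, but both require the explicit gadget diagrams, so the full verification is deferred to the appendix.
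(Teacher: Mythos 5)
Your construction follows the same architecture as the paper's: one constant-size blob per tape cell, storage units holding the cell's symbol, a transition gadget obtained from Lemma~\ref{lemma:everyFunction}, the head represented implicitly by which blob carries the live activation signal, and the same timing/padding and non-interference caveats. The one genuinely different design choice is how the all-dead target configuration is reached. You add an explicit ``sweep'' sub-circuit that, upon detecting the halting state, travels down the row querying and toggling every storage unit to zero. The paper instead normalizes $T$ so that its unique final state is reachable only after the machine has itself cleared the tape, and then chooses the binary encoding so that the blank symbol is $0^n$, the final state is $0^s$, and every other state begins with a $1$; halting then coincides with the all-dead configuration with no extra machinery, and the leading-$1$ convention simultaneously supplies the clock signal needed to activate NOT gates and to query the storage units. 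Your sweep is implementable with the available gadgets (query via $I_2$, feed the result back into $I_1$), so both routes work; the paper's buys simplicity and fewer correctness obligations, yours avoids modifying $T$. One point you should tighten: the storage unit's $I_1$ input \emph{toggles} rather than \emph{sets} the stored bit, so you cannot ``write back the new symbol via the $I_1$ toggle inputs'' by sending the new bit itself --- you must send the XOR of the old and new bit. The paper bakes this into the transition gadget by having $F$ output the symmetric difference $d_i = y_i \oplus z_i$ directly; your circuit has access to both the old and new symbol so it can compute this, but as written the write-back step is not quite right.
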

\begin{proof}
  We suppose that the Turing machine $T$ has a single final state,
  which can only be accessed after clearing the tape.
  We present the construction of the graph $G$ simulating $T$ through the following steps.
  First, we encode the states of $T$,
  the tape alphabet,
  and the transition function in binary.
  Then, we introduce the notion of blob, the building blocks of $G$,
  and we show that blobs are able to accurately simulate the transition function of $T$.
  Afterwards, we approximate the size of a blob, and finally we define $G$.
  
  \subparagraph{Binary encoding.}
  Let $T_s \in \mathbb{N}$
  be the number of states of $T$,
  and $T_a \in \mathbb{N}$ be the size of its tape alphabet.
  We pick two small integers $s$ and $n$ satisfying $T_s \leq 2^{s-1}$
  and $T_a \leq 2^{n-1}$.
  We encode the states of $T$ as elements of $\{0, 1\}^s$,
  and the alphabet symbols as elements of $\{0, 1\}^n$,
  while respecting the following three conditions: the blank symbol is mapped to
  $0^n$, the final state of $T$ is mapped to $0^s$,
  and all the other states are mapped to strings starting with $1$.
  Then, with respect to these mappings, we modify the transition function of $T$ to:
  \[
  F: \{0, 1\}^s \times \{0, 1\}^n \rightarrow
  \{0, 1\}^s \times \{0, 1\}^s \times \{0, 1\}^n.
  \]
  Instead of using one bit to denote the movement,
  we use $2s$ bits to store the state and signify the movement:
  if the first $s$ bits are zero, the head is moving right;
  if the second $s$ bits are zero, the head is moving left;
  if the first $2s$ bits are zero, the computation ended.
  Moreover, the last $n$ bits of the image of $F$ do not encode the new symbol,
  but the symmetric difference between the previous and the next symbol:
  if the $i$-th bit of the tape symbol goes from $y_i$ to $z_i$,
  then $F$ outputs $d_i = y_i \oplus z_i$ (XOR of these two).

  \subparagraph{Constructing blobs.}
  As we said at the beginning, the graph $G$ is obtained by simulating
  each cell of the tape with a blob,
  which is a gadget storing a tape symbol, and that is able,
  when receiving a signal corresponding to a state of $T$, to
  compute the corresponding result of the transition function.
  The main components of a blob are as follows.  
  \begin{itemize}
  \item
  Memory: $n$ storage units ($s_1$, $s_2$, $\dots$, $s_n$)
  are used to keep in memory a tape symbol $a \in \{0,1\}^n$ of $T$.
  \item
  Receptor: $2s$ inputs ($I_1$, $I_2$, $\dots$, $I_{2s}$)
  are used to receive states $q \in \{0,1\}^s$ of $T$
  either from the left or from the right.
  \item
  Transmitter: $2s$ outputs ($O_1$, $O_2$, $\dots$, $O_{2s}$)
  are used to send states $q \in \{0,1\}^s$ of $T$
  either to the right or to the left.
  \item
  Transition gadget: using Lemma~\ref{lemma:everyFunction},
  we create a gadget computing each of the $2s+n$ output bits of $F$.
  These gadgets are then combined into a bigger gadget $g_F$
  that evaluates them all separately in parallel, and computes
  the transition function $F$.
  Note that $g_F$ is composed of $\calO((n+s)2^{n+s})$
  basic gadgets, and its evaluation time is $\calO(n+s)$.
  \end{itemize}
   
  Blobs are connected in a row to act as a tape:
  for every $1 \leq i \leq s$,
  the output $O_i$ of each blob is connected to
  the input $I_i$ of the blob to its right,
  and the output $O_{s+i}$ of each blob is connected to
  the input $I_{s+i}$ of the blob to its left.
  When receiving a signal,
  the blob transmits the received state
  and the tape symbol stored in memory
  to the transition gadget $g_F$,
  which computes the corresponding transition,
  and then apply its results. 
  We now detail this inner behavior.
  Note that when a gadget is supposed to receive
  simultaneously a set of signals coming from different sources,
  it is always possible to add wires of adapted length
  to ensure that all the signals end up synchronized.

  \subparagraph{Simulating the transition function.}
  In order to simulate the transition function of $T$,
  a blob acts according to the three following steps:
  
  \smallskip
  \noindent
  \textbf{1.} Transmission of the state.
  A blob can receive a state
  either from the left (through inputs $I_1,I_2,\ldots,I_s$)
  or from the right (through inputs $I_{s+1},I_{s+2},\ldots,I_{2s}$),
  but not from both sides at the same time,
  since at every point in time there is at most one active state.
  Therefore, if for every $1 \leq i \leq s$ we denote by $x_i$
  the disjunction of the signals received by $I_i$ and $I_{s+i}$,
  then the resulting tuple $(x_1,x_2, \ldots, x_s)$ is equal to
  the state received as signal (either from the left or the right),
  which can be fed to the gadget $g_F$.
  Formally, the blob connects,
  for all $1 \leq i \leq s$,
  the pair $I_i$, $I_{s+i}$ to an OR gate
  whose output is linked to the input $I_i$ of $g_F$.
  
  \smallskip
  \noindent
 \textbf{2.} Transmission of the tape symbol.
  Since the first component of any state apart from the final state
  is always $1$,
  whenever a blob receives a state,
  the component $x_1$ defined in the previous paragraph has value $1$.
  The tape symbol $(y_1,y_2, \ldots, y_n)$
  currently stored in the blob can be obtained by sending,
  for every $1 \leq i \leq n$,
  a copy of $x_1$ to the input $I_2$ of the storage unit $s_i$,
  causing it to broadcast its stored state $y_i$.
  The tuple can then be fed to the gadget $g_F$.
  Formally, the blob uses $n$ splitters
  to transmit the result of the OR gate
  between $I_1$ and $I_{s+1}$ to
  the input $I_1$ of each storage unit.
  Then, for every $1 \leq i \leq n$,
  the output $O_1$ of the storage unit $s_i$
  is connected to the input $I_{s+i}$ of $g_F$.

  \smallskip
  \noindent
  \textbf{3.} Application of the transition.
  Upon receiving a state and a tape
  symbol, $g_F$ computes the result of the transition
  function, yielding a tuple $(r_1,r_2, \ldots, r_{s+n})$.
  The blob now needs to do two things: send a state to the successor blob,
  and update the element of the tape.
    
  Connecting the output $O_i$ of $g_F$
  to the output $O_i$ of the blob for every $1 \leq i \leq 2s$
  ensures that the state is sent to the correct neighbor:
  the values $(r_{1},r_2, \ldots, r_s)$ are nonzero
  if the head is supposed to move to the right,
  and the outputs $O_1,O_2, \ldots,O_s$ of the blob are connected to the right.
  Conversely, $(r_{s+1},r_{s+2}, \ldots, r_{2s})$ is nonzero
  if the head is supposed to move to the left,
  and the outputs $O_{s+1},O_{s+2}, \ldots,O_{2s}$ of the blob
  are connected to the left.
  
  Finally, connecting the output $O_{2s + i}$ of $g_F$
  to the input $I_1$ of $s_i$ for all $1 \leq i \leq n$
  ensures that the state is correctly updated:
  this sends the signal $d_i$ to the input $I_1$ of the storage unit $s_i$.
  Since $d_i$ is the difference between the current bit and the next,
  the state of $s_i$ will change only if it has to.
  
  \subparagraph{Size of a blob.}
  To prepare the signal for the transition function and to send the signal to another blob,
  only $\calO(n+s)$ basic gadgets and $\calO(n)$ steps are needed.
  As a consequence, the size of a blob is mainly determined by
  the size of the transition gadget $g_F$:
  one blob is composed of $\calO ((n+s)2^{n+s})$ basic gadgets of constant size,
  and evaluating a transition requires $\calO(n+s)$ steps.
  Since $n$ and $s$ are constants (they depend on $T$, and not on the input $u$),
  the blob has constant size.
  Moreover, all the basic gadgets used in the construction have bounded degree,
  so the blob also has bounded degree.
  
  \subparagraph{Constructing $G$.}
  Now that we have blobs that accurately
  simulate the transition function of $T$, constructing the graph
  $G$ simulating the behavior of $T$ over the input $u$ is easy:
  we simply take a row of $C$ blobs
  (remember that $C \in \mathbb{N}$ is the number of tape cells used by $T$ to process $u$).
  Since the size of a blob is constant, $G$ is polynomial in $C$.
  We define the initial configuration of $G$ by setting the
  states of the $|u|$ blobs on the left of the row to the letters of $u$,
  and setting the inputs $I_1$ to $I_s$ of the leftmost blob to
  the signal corresponding to the initial state of $T$ as if it was already in the process.
  As explained earlier, the blobs then evolve by following the run of $T$.
  If the Turing machine stops, then its tape is empty and the final state is sent.
  Since in $G$ the final state is encoded by $0^s$
  and the blank symbol is encoded by $0^n$,
  this results in $G$ reaching the configuration where all
  the vertices are dead.
  Conversely, if $T$ runs forever starting
  from the input $u$, there will always be some live vertices in
  $G$ to transmit the signal corresponding to the state of $T$.
\end{proof}

\begin{proof}[Proof of Theorem~\ref{theorem:overpopulation}]
  By Lemma~\ref{lemma:simulating_t_m}, we can reduce any problem solvable
  by a polynomially bounded Turing machine into $\reach$,
  asking whether the configuration with only dead vertices is reached,
  or into $\average$,
  asking whether the long-run average is strictly above $0$.
\end{proof}

\section{Conclusion}
In this work, we identify two simple update rules for Game of Life. 
We show (in Section~\ref{section:motivation}) that these simple rules can model several well-studied 
dynamics in the literature.
While we show that efficient algorithms exist for the underpopulation rule, the 
computational problems are PSPACE-hard for the overpopulation rule.
An interesting direction for future work would be to consider whether for certain 
special classes of graphs (e.g., grids) efficient algorithms can be obtained for 
the overpopulation rule.

\bibliography{gol}

\newpage
\appendix
\section{More motivation}\label{appendix:motivation}

\subparagraph{Opinions.}
In order to sell their products,
some companies hire well-known figures to advertise for it,
hoping that other people will follow the celebrity and buy the product.
 
The paper~\cite{public_opinion_formation}
studies how public opinions are formed by \emph{influentials},
well connected individuals that expose others to new ideas or opinions.
They examine the ``influentials hypothesis''
that a small number of influentials can influence the whole network.
By computer simulations of interpersonal influence processes,
they found that the opinion
either get stuck at the influentials,
or is propagated by the network fairly quickly.
In most cases, the spread of opinion does not depends on the connectivity of influentials,
but it is supported by easily convincible individuals.
Then, opinion needs to
get a critical mass of supporters,
otherwise it spreads only in the close proximity of starting vertex.
 
In the model, one individual starts with a fixed opinion $1$,
and the others start with opinion $0$.
Each individual has a willingness to be convinced $\psi \in [0,1]$,
and changes its opinion to $1$ as soon as a fraction of its
neighbors greater than $\psi$ has opinion $1$.
In our setting, this can be expressed as follows:
for every vertex $i$ having a willingness to be conviced $\psi_i$,
and $b_i$ live neighbors,
\[
\begin{array}{ll}
\phi_0(m,i) = \left\{
\begin{array}{lll}
0 \textup{ if } b_i < \psi_i;\\
1 \textup{ if } b_i \geq \psi_i.
\end{array}
\right. &
\phi_1(m,i) = 1
\end{array}
\]
We can view that process as an underpopulation process.
The only difference is that in our process every threshold is the same.

\section{Underpopulation rule}

\subsection{Definitions and Examples}\label{appendix:examples}
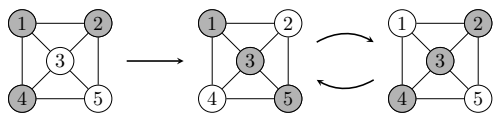
\begin{figure}
  \begin{center}
    \scalebox{0.5}{
    \begin{tikzpicture}
      \foreach \c in {1,6,11}{
        \pgfmathtruncatemacro{\ste}{\c}
        \pgfmathtruncatemacro{\ma}{0}\
        \coordinate (c\c1) at (\ste,\ma+2);
        \node[draw,circle,fill=white,scale=1.5,inner sep=2]
          (n\c1) at (c\c1) {4};
        \coordinate (c\c2) at (\ste,\ma+4);
        \node[draw,circle,fill=white,scale=1.5,inner sep=2]
          (n\c2) at (c\c2) {1};
      }
      \foreach \c in {3,8,13}{
        \pgfmathtruncatemacro{\ste}{\c}
        \pgfmathtruncatemacro{\ma}{0}\
        \coordinate (c\c1) at (\ste,\ma+2);
        \node[draw,circle,fill=white,scale=1.5,inner sep=2]
          (n\c1) at (c\c1) {5};
        \coordinate (c\c2) at (\ste,\ma+4);
        \node[draw,circle,fill=white,scale=1.5,inner sep=2]
          (n\c2) at (c\c2) {2};
      }
      \foreach \c in {2,7,12}{
        \pgfmathtruncatemacro{\ste}{\c}
        \pgfmathtruncatemacro{\ma}{3}
        \pgfmathtruncatemacro{\height}{(\ma)}
        \pgfmathtruncatemacro{\ste}{\c}
        \coordinate (c\c1) at (\ste,\height);
        \node[draw,circle,fill=white,scale=1.5,inner sep=2]
          (n\c1) at (c\c1) {3};
      }
      \foreach \c in {1,6,11}{
        \pgfmathtruncatemacro{\nx}{\c+1}
        \pgfmathtruncatemacro{\nnx}{\c+2}
          \draw (n\c1) -- (n\c2);
          \draw (n\c1) -- (n\nx1);
          \draw (n\c2) -- (n\nx1);
          \draw (n\c1) -- (n\nnx1);
          \draw (n\c2) -- (n\nnx2);
          \draw (n\nx1) -- (n\nnx1);
          \draw (n\nx1) -- (n\nnx2);
          \draw (n\nnx1) -- (n\nnx2);
         
      }
      \node[draw,circle,fill=black!30,scale=1.5,inner sep=2]
       (n11) at (c11) {4};
      \node[draw,circle,fill=black!30,scale=1.5,inner sep=2]
       (n12) at (c12) {1};
      \node[draw,circle,fill=black!30,scale=1.5,inner sep=2]
       (n32) at (c32) {2};
      \node[draw,circle,fill=black!30,scale=1.5,inner sep=2]
       (n62) at (c62) {1};
      \node[draw,circle,fill=black!30,scale=1.5,inner sep=2]
       (n71) at (c71) {3};
      \node[draw,circle,fill=black!30,scale=1.5,inner sep=2]
       (n81) at (c81) {5};
      \node[draw,circle,fill=black!30,scale=1.5,inner sep=2]
       (n111) at (c111) {4};
      \node[draw,circle,fill=black!30,scale=1.5,inner sep=2]
       (n121) at (c121) {3};
      \node[draw,circle,fill=black!30,scale=1.5,inner sep=2]
       (n132) at (c132) {2};

      \draw[-{stealth},very thick]
       ($(c21) + (1.75,0)$) to ($(c71)-(1.75,0)$);
      \draw[-{stealth},very thick]
       ($(c71) + (1.75,0.5)$) to [bend left] ($(c121)-(1.75,-0.5)$);
      \draw[-{stealth},very thick]
       ($(c121) - (1.75,0.5)$) to [bend left] ($(c71)+(1.75,-0.5)$);
    \end{tikzpicture}
    }
  \end{center}
  \caption{Evolution of a graph
  under the underpopulation rule
  $\underpopulation{2}{2}$.
  Live vertices are gray.}\label{underpopulation_example_picture_app}
\end{figure}

We define some technical notions that are used in our proofs.
We then illustrate these notions
using the setting presented by Figure \ref{underpopulation_example_picture_app}.

\subparagraph{Counting prefixes and suffixes.}
Given an integer
$m \in \mathbb{N}$
and a word $y \in \{0,1,\wildcard \}^*$
of size $n \leq m$,
on top of counting factors of histories that match the regular expression $y$,
we are also interested in the number
of vertices $s \in V$ whose history
$\word{s}[0,m]$
admits a prefix, resp. suffix,
matching $y$.
\begin{align*} 
&\numbu{\vdash y}_m
= \big{|}
\{
s \in V |
\word{s}[0,n] = y
\}
\big{|};\\
&\numbu{y \dashv}_m
= \big{|}
\{
s \in V |
\word{s}[m-n,m] = y
\}
\big{|}.
\end{align*}

\subparagraph{Counting synchronised factors.}
We introduce a way of counting
the number of synchronised occurrences of a pair
of factors $y,z \in \{0,1,\wildcard \}^*$
of same size $n \leq m$
in histories $\word{s}[0,m]$ and $\word{t}[0,m]$
corresponding to neighbour vertices $(s,t) \in E$.
\begin{align*}
&\numbu{y,z}_m
= \\
& \rlap{$
\ \big{|}
\{
(s,t,i) \in E \times \mathbb{N} |
i+n \leq m,
\word{s}[i,i+n] = y,
\word{t}[i,i+n] = z
\}
\big{|};$}\\
&\numbu{\vdash y,\vdash z}_m
=
\big{|}
\{
(s,t) \in E |
\word{s}[0,n] = y,\word{t}[0,n] = z,
\}
\big{|};\\
&\numbu{y \dashv,z \dashv}_m
=
\big{|}
\{
(s,t) \in E |
\word{s}[m-n,m] = y,
\word{t}[m-n,m] = z
\}
\big{|}.
\end{align*}

\subparagraph{Examples.}
In the setting 
depicted in Figure \ref{underpopulation_example_picture_app},
the five histories are as follows:
\[
\tau_1 = 1(10)^{\omega} \ \ \
\tau_2 = (10)^{\omega} \ \ \
\tau_3 = 01^{\omega} \ \ \
\tau_4 = (10)^{\omega} \ \ \
\tau_5 = (01)^{\omega}
\]
Let us focus on the prefixes of size $6$
of the histories:
\[
\left.
\begin{array}{l}
\tau_1[0,6] = 110101\\[2pt]
\tau_4[0,6] = 101010
\end{array}
\right.
\
\tau_3[0,6] = 011111
\
\left.
\begin{array}{l}
\tau_2[0,6] = 101010\\[2pt]
\tau_5[0,6] = 010101
\end{array}
\right.
\]
We observe that the number of occurrences 
of diverse factors $y \in \{0,1,\wildcard \}^*$
of length $3$ is clearly unbalanced:
\begin{align*}
&\numbu{1 \wildcard 1}_6 = 11&
&\numbu{0 \wildcard 0}_6 = 7&
&\numbu{1 \wildcard 0}_6 = 1&
&\numbu{0 \wildcard 1}_6 = 1&
\end{align*}
Note that the divergence increases if we consider
longer prefixes:
both $1 \wildcard 1$ and $0 \wildcard 0$
appear infinitely often,
while $1 \wildcard 0$ and $0 \wildcard 1$
do not appear more than a single time each.
Our key lemma states that such a disparity
always happens:
we show that the number of occurrences
of $1 \wildcard 0$ and $0 \wildcard 1$ is always bounded
(which implies Proposition \ref{theorem:underpopulation_cycle}),
and that the bound is polynomial with respect to
$G$, $i_0$ and $i_1$
(which implies Proposition \ref{proposition:underpopulation}).

Let us now focus on the prefixes of size $4$
of the histories:
\[
\left.
\begin{array}{l}
\tau_1[0,4] = 1101\\[2pt]
\tau_4[0,4] = 1010
\end{array}
\right.
\
\tau_3[0,4] = 0111
\
\left.
\begin{array}{l}
\tau_2[0,4] = 1010\\[2pt]
\tau_5[0,4] = 0101
\end{array}
\right.
\]
Note that $\numbu{1 0 1}_4 = 4$ and $\numbu{0 1 0}_4 = 3$:
the factor $010$ appear once
in the history of the vertices $2,4$ and $5$,
and $101$ also appears in the history of $1$.
Since each of these vertices has exactly three neighbours,
the number of synchronised pairs (see Part \ref{appendix:lemma} of the appendix for the definition)
corresponding to both factors is:
\begin{align*}
&\numbu{1 0 1,\wildcard \wildcard \wildcard}_4 = 4 \cdot 3 =12,&
&\numbu{0 1 0,\wildcard \wildcard \wildcard}_4 = 3 \cdot 3 =9.&
\intertext{
Let us focus on the second symbol of the second parameter.
All of the factors synchronised with $101$ have a $1$ in the middle,
while for each of the three copies of the factor $010$,
only the synchronised factor corresponding to the vertex $3$
has a $1$ in the middle, and the other two have a $0$:}
&\numbu{1 0 1,\wildcard 1 \wildcard}_4 = 12,&
&\numbu{0 1 0,\wildcard 1 \wildcard}_4 = 3,&\\
&\numbu{1 0 1,\wildcard 0 \wildcard}_4 = 0,&
&\numbu{0 1 0,\wildcard 0 \wildcard}_4 = 6.&
\end{align*}
These divergences are expected
as the update rule is $\underpopulation{2}{2}$:
for each occurrence of the factor $101$,
since the last symbol is $1$,
at least two neighbors have to be alive at the step corresponding to the second symbol,
i.e.,
$\numbu{1 0 1,\wildcard 1 \wildcard}_4 \geq 2 \cdot \numbu{1 0 1}_4$.
Conversely,
for each occurrence of the factor $010$,
since the third symbol is $0$,
at most one neighbor is alive at the step corresponding to the second symbol,
i.e.,
$\numbu{0 1 0,\wildcard 1 \wildcard}_4 \leq 1 \cdot \numbu{0 1 0}_4$.

\subsection{Proof of Lemma \ref{lemma:bound_change}}\label{appendix:lemma}

We present the formal proof of our key Lemma.

\keyLemma*

\begin{proof}[Proof of Lemma \ref{lemma:bound_change}]
Let us fix some integer $m \geq 3$.
We obtain the proof by combining
the following equations resulting from simple observations
on the number of diverse factors in the prefixes
$\word{s}[1,m]$ of the histories of $G$.
In order to lighten the notation,
for every $y,z \in \{0,1,\wildcard \}^*$
we write 
$\numbu{y}$ instead of $\numbu{y}_m$
and 
$\numbu{y,z}$ instead of $\numbu{y,z}_m$.
\begin{align*} 
&\textup{1. }\numbu{100,\wildcard 1 \wildcard} \leq
(i_0-1)\numbu{100}; & 
&\textup{2. }\numbu{001,\wildcard 1 \wildcard} \geq
i_0\numbu{001};\\
&\textup{3. }
\numbu{110,\wildcard 1 \wildcard} \leq
(i_1-1)\numbu{110}; &
&\textup{4. }
\numbu{011,\wildcard 1 \wildcard} \geq
i_1\numbu{011};\\
&\textup{5. }
|\numbu{100} - \numbu{001}| \leq |V|; &
&\textup{6. }
|\numbu{011} - \numbu{110}| \leq |V|;\\
&\rlap{$\textup{7. }
|\numbu{001,\wildcard 1 \wildcard} +
\numbu{011,\wildcard 1 \wildcard} -
\numbu{100,\wildcard 1 \wildcard}-
\numbu{110,\wildcard 1 \wildcard}| \leq |E|;$}
\end{align*}
We begin by detailing how these seven equations are obtained.
The first four equations are direct
rephrasing of the underpopulation rule
$\underpopulation{i_0}{i_1}$.
\begin{itemize}
\item
A dead vertex becomes alive if and only if it has
at least $i_0$ live neighbours.
Therefore, whenever the factor $100$ appears in a history,
at most $i_0-1$ factors of the form $\wildcard 1 \wildcard$
are synchronised with it,
since otherwise the third element of the factor would be
a $1$ instead of a $0$.
Conversely, whenever the factor $101$ appears,
it needs to be synchronised with at least $i_0$
factors of the form $\wildcard 1 \wildcard$
to guarantee that the third element is a $1$.
Formally,
\begin{align}
\numbu{1 0 0,\wildcard 1 \wildcard} 
&\leq
(i_0-1) \numbu{1 0 0};
\label{firsta}\\
\numbu{0 0 1,\wildcard 1 \wildcard} 
&\geq
i_0 \numbu{0 0 1};
\label{firstb}
\end{align}
\item
A live vertex stays alive if and only if it has
at least $i_1$ live neighbours:
\begin{align}
\numbu{1 1 0,\wildcard 1 \wildcard} 
&\leq
(i_1-1) \numbu{1 1 0};
\label{seconda}\\
\numbu{0 1 1,\wildcard 1 \wildcard} 
&\geq
i_1 \numbu{0 1 1}.
\label{secondb}
\end{align}
\end{itemize}
Each of the last three equations is obtained by
counting a given factor in two different ways.
First, the occurrences of the factors $00$ and $11$
can be counted either by differentiating the previous letter,
or the next one.
\[
\begin{array}{rcl}
\numbu{\vdash 00} +
\numbu{000} +
\numbu{100}
& = &
\numbu{00}\\
& = &
\numbu{000} +
\numbu{001} +
\numbu{00\dashv};\\
\numbu{\vdash 11} +
\numbu{111} +
\numbu{011}
& = &
\numbu{11}\\
& = &
\numbu{110} +
\numbu{111} +
\numbu{11\dashv}.
\end{array}
\]
Moving the terms around,
cancelling
$\numbu{000}$
and $\numbu{111}$,
and using the fact that
the terms containing
the $\vdash$ and $\dashv$ symbols
are between $0$ and $|V|$
yields our next two equations.
\begin{align}
\label{equ:fac00}
|\numbu{100}-
\numbu{001}|
& =
|\numbu{\vdash 00}-
\numbu{00\dashv}|
\leq |V|.\\
\label{equ:fac11}
|\numbu{011}-
\numbu{110}|
& =
|\numbu{\vdash 11}-
\numbu{11\dashv}|
\leq |V|.
\end{align}
Finally, 
for every $y,z \in \{0,1,\wildcard \}^*$,
the equation $\numbu{y,z} = \numbu{z,y}$ holds
as the definition is symmetrical.
Therefore, in particular,
$
\numbu{\wildcard 1,1 \wildcard} =
\numbu{1 \wildcard,\wildcard 1}
$.
Let us expand this equality
by considering the possible previous symbols on the left side
and the possible next symbols on the right side.
\[
\begin{array}{lll}
\numbu{\wildcard 1,1 \wildcard}
& = &
\numbu{\vdash \wildcard 1,\vdash 1 \wildcard} +
\numbu{0 \wildcard 1,\wildcard 1 \wildcard}  +
\numbu{1 \wildcard 1,\wildcard 1 \wildcard}\\
& = &
\numbu{\vdash \wildcard 1,\vdash 1 \wildcard} +
\numbu{0 0 1,\wildcard 1 \wildcard}  +
\numbu{0 1 1,\wildcard 1 \wildcard}  +
\numbu{1 \wildcard 1,\wildcard 1 \wildcard};\\
\numbu{1 \wildcard,\wildcard 1} & = &
\numbu{1 \wildcard  \dashv,\wildcard 1 \dashv} + 
\numbu{1 \wildcard 0,\wildcard 1 \wildcard} + 
\numbu{1 \wildcard 1,\wildcard 1 \wildcard}\\
& = &
\numbu{1 \wildcard  \dashv,\wildcard 1 \dashv} + 
\numbu{1 0 0,\wildcard 1 \wildcard} + 
\numbu{1 1 0,\wildcard 1 \wildcard} + 
\numbu{1 \wildcard 1,\wildcard 1 \wildcard}.
\end{array}
\]
Moving the terms around,
cancelling
$\numbu{1 \wildcard 1,\wildcard 1 \wildcard}$,
and using the fact that
the terms containing
the $\vdash$ and $\dashv$ symbols
are between $0$ and $|E|$
yields our final equation.
\begin{equation}\label{equ:final}
|\numbu{001,\wildcard 1 \wildcard} +
\numbu{011,\wildcard 1 \wildcard} -
\numbu{100,\wildcard 1 \wildcard}-
\numbu{110,\wildcard 1 }|
\leq
|E|.
\end{equation}

Now that the equations are proved,
we use them to demonstrate the statement.
First, combining Equations
\ref{equ:fac00} and \ref{equ:fac11}
allows us to bound
$\numbu{0\wildcard 1}$ with
$\numbu{1\wildcard 0} + 2|V|$:
\[
\begin{array}{lll}
\numbu{0\wildcard 1}
& = &
\numbu{001}+
\numbu{011}\\
& \leq &
\numbu{100}+|V|+
\numbu{110}+|V|\\
& \leq &
\numbu{1 \wildcard 0}
+ 2|V|.
\end{array}
\]
As a consequence,
in order to conclude the demonstration,
we only need to bound $\numbu{1\wildcard 0}$ with
$|E| + (i_0+i_1)|V|$.
This is done as follows.
By applying equations \ref{firsta}, \ref{firstb},
\ref{seconda} and \ref{secondb} to Equation
\ref{equ:final},
and rearranging the terms,
we get
\[
\numbu{1 0 0} + \numbu{1 1 0}
\leq
|E|+
i_0(
\numbu{1 0 0} -
% \numbu{\vdash 0 1} -
\numbu{0 0 1}
)+
i_1(
\numbu{1 1 0} -
\numbu{0 1 1}
).
\]
This implies the desired inequality,
as
$\numbu{1 0 0} + \numbu{1 1 0} = \numbu{1 \wildcard 0}$,
and
the content of both parentheses
can be over-approximated with $|V|$
through the use of Equations \ref{equ:fac00}
and \ref{equ:fac11}.
\end{proof}

\section{Overpopulation rule}
\subsection{Bounded space algorithm}\label{appendix:overpopulationAlgorithm}

We show PSPACE algorithms solving
the problems $\reach$ and $\average$
for any overpopulation rule.

\begin{lemma}\label{lemma:bounded_space_algorithm}
  For every $i_0,i_1 \in \mathbb{N}$,
  the problems $\reach$ and $\average$ for $\overpopulation{i_0}{i_1}$ are in PSPACE.
\end{lemma}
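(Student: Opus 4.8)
The plan is to exploit the determinism of overpopulation rules together with the fact that every configuration of $G$ is described by $|V|$ bits, so that the entire dynamics can be replayed step by step while storing only a constant number of configurations together with a few counters of polynomial bit-length.

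First I would handle $\reach{}$. The configuration graph $\conf{G}{\overpopulation{i_0}{i_1}}$ has exactly $2^{|V|}$ vertices, and the walk from $c_I$ is a finite simple path followed by a simple cycle, so $c_F$ is reached from $c_I$ if and only if it is reached within the first $2^{|V|}$ steps. The algorithm keeps the current configuration $c$ (initialised to $c_I$) and a binary step counter ranging over $\{0,1,\dots,2^{|V|}\}$, and repeatedly (i) tests whether $c=c_F$, accepting if so, and (ii) replaces $c$ by its successor under $\overpopulation{i_0}{i_1}$ while incrementing the counter, rejecting once the counter exceeds $2^{|V|}$. Computing the successor of a configuration only requires, for each vertex, counting its live neighbours and comparing the result with $i_0$ or $i_1$, which is doable in polynomial time and space; the counter uses $|V|+1$ bits. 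Hence $\reach{}$ is in PSPACE. Note that this uses nothing specific to overpopulation beyond determinism and locality, which is why the bound holds for arbitrary Game of Life update rules.

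For $\average{}$ I would proceed in two phases. In the first phase, simulate $2^{|V|}$ steps from $c_I$ to obtain a configuration $c^{*}$; since the non-cyclic prefix of the walk has length at most $2^{|V|}$, the configuration $c^{*}$ lies on the cycle reached from $c_I$. In the second phase, replay the dynamics from $c^{*}$, maintaining a running sum $N$ of the numbers of live vertices of the configurations visited and a step counter $L$, and stopping as soon as the successor equals $c^{*}$ again; then $L$ is the length of the reached cycle and $N$ is the total number of live (vertex, time) pairs over one period. The average ratio of live vertices on the cycle is $N/(L\cdot|V|)$, so the algorithm accepts if and only if $\delta < N/(L\cdot|V|)$, i.e.\ if and only if $\delta\cdot L\cdot |V| < N$, a comparison between rationals of polynomially many bits (assuming $\delta$ is given as a fraction). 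Throughout, we store only $c^{*}$, the current configuration, and the counters $N \le |V|\cdot 2^{|V|}$ and $L \le 2^{|V|}$, all of polynomial bit-length, so $\average{}$ is in PSPACE as well.

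The argument is essentially bookkeeping; the only points that require care are the quantitative bounds, namely that both the non-cyclic prefix and the cycle of the deterministic walk have length at most $2^{|V|}$ so that fixed polynomial-size counters suffice, and the observation that the accumulated sum $N$ used for $\average{}$ never exceeds $|V|\cdot 2^{|V|}$ and hence still fits in $\calO(|V|)$ bits, so that the final comparison with $\delta$ can be performed within polynomial space.
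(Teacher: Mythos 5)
Your proposal is correct and follows essentially the same approach as the paper: simulate $2^{|V|}$ steps with a polynomial-size binary counter to decide \reach{}, and for \average{} first advance $2^{|V|}$ steps to land on the cycle, then replay one period while accumulating the live-vertex count and cycle length. The extra bookkeeping you supply about the bit-lengths of $N$, $L$, and the comparison with $\delta$ is a harmless refinement of the paper's argument, not a different route.
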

\begin{proof}
  The reachability problem $\reach$ for a graph with $n$ vertices can be solved by a simple simulation:
  We simulate the first $2^n$ steps of the process starting from the initial configuration
  (remark that counting up to $2^n$ in binary requires only $n$ bits).
  If we see the target configuration along the way, we answer yes.
  Otherwise, we answer no:
  since a graph with $n$ vertices admits $2^n$ distinct configurations,
  we know for sure that we already completed a cycle of configurations,
  and no new configuration will be reached.

  To solve $\average$ for a graph with $n$ vertices,
  we first simulate the process to compute the configuration $c$
  reached in $2^n$ steps from the initial configuration.
  For the same reasons as before,
  we know that $c$ is part of a cycle of configurations.
  Then, we simulate the process from $c$ until we see $c$ again,
  summing the number of live vertices encountered along the way,
  and keeping track of the length of the cycle.
  This allows us to compute the average of live vertices in the cycle reached from the initial configuration.
%  Both algorithms have to remember only a constant number of configurations, so they are space-bounded.
\end{proof}

\begin{figure}
  \begin{minipage}{0.45\textwidth}
    
            %\begin{figure}
            \begin{center}
            \scalebox{0.5}{
            
\begin{tikzpicture}
\node (Wireanno1) at (-1.3, 1.2) {  \LARGE $I_1$};
\node[scale=5,color=black!30!white] (Wireanno1) at (-0.4, 1.2) {  \LARGE \textbraceleft };
\node (Wireanno3) at (9.45, 1.2) {  \LARGE $O_1$};
\node[scale=5,color=black!30!white] (Wireanno3) at (8.399999999999999, 1.2) {  \LARGE \textbraceright };
\node[draw,circle,fill=white] (0Wire0) at (0, 0.0) {};
\node[draw,circle,fill=white] (0Wire1) at (0, 0.8) {};
\node[draw,circle,fill=white] (0Wire2) at (0, 1.6) {};
\node[draw,circle,fill=white] (0Wire3) at (0, 2.4000000000000004) {};
\node[draw,circle,fill=white] (1Wire0) at (1, 0.8) {};
\node[draw,circle,fill=white] (1Wire1) at (1, 1.6) {};
\node[draw,circle,fill=white] (2Wire0) at (2, 0.0) {};
\node[draw,circle,fill=white] (2Wire1) at (2, 0.8) {};
\node[draw,circle,fill=white] (2Wire2) at (2, 1.6) {};
\node[draw,circle,fill=white] (2Wire3) at (2, 2.4000000000000004) {};
\node[draw,circle,fill=white] (3Wire0) at (3, 0.8) {};
\node[draw,circle,fill=white] (3Wire1) at (3, 1.6) {};
\node[draw,circle,fill=white] (4Wire0) at (4, 0.0) {};
\node[draw,circle,fill=white] (4Wire1) at (4, 0.8) {};
\node[draw,circle,fill=white] (4Wire2) at (4, 1.6) {};
\node[draw,circle,fill=white] (4Wire3) at (4, 2.4000000000000004) {};
\node[draw,circle,fill=white] (5Wire0) at (5, 0.8) {};
\node[draw,circle,fill=white] (5Wire1) at (5, 1.6) {};
\node[draw,circle,fill=white] (6Wire0) at (6, 0.0) {};
\node[draw,circle,fill=white] (6Wire1) at (6, 0.8) {};
\node[draw,circle,fill=white] (6Wire2) at (6, 1.6) {};
\node[draw,circle,fill=white] (6Wire3) at (6, 2.4000000000000004) {};
\node[draw,circle,fill=white] (7Wire0) at (7, 0.8) {};
\node[draw,circle,fill=white] (7Wire1) at (7, 1.6) {};
\node[draw,circle,fill=white] (8Wire0) at (8, 0.0) {};
\node[draw,circle,fill=white] (8Wire1) at (8, 0.8) {};
\node[draw,circle,fill=white] (8Wire2) at (8, 1.6) {};
\node[draw,circle,fill=white] (8Wire3) at (8, 2.4000000000000004) {};
\draw (0Wire0) -- (0Wire0);
\draw (0Wire0) -- (0Wire0);
\draw (0Wire0) -- (1Wire0);
\draw (0Wire1) -- (0Wire1);
\draw (0Wire1) -- (0Wire1);
\draw (0Wire1) -- (1Wire0);
\draw (0Wire2) -- (0Wire2);
\draw (0Wire2) -- (0Wire2);
\draw (0Wire2) -- (1Wire1);
\draw (0Wire3) -- (0Wire3);
\draw (0Wire3) -- (0Wire3);
\draw (0Wire3) -- (1Wire1);
\draw (1Wire0) -- (0Wire0);
\draw (1Wire0) -- (0Wire1);
\draw (1Wire0) -- (2Wire0);
\draw (1Wire0) -- (2Wire1);
\draw (1Wire0) -- (2Wire2);
\draw (1Wire0) -- (2Wire3);
\draw (1Wire1) -- (0Wire2);
\draw (1Wire1) -- (0Wire3);
\draw (1Wire1) -- (2Wire0);
\draw (1Wire1) -- (2Wire1);
\draw (1Wire1) -- (2Wire2);
\draw (1Wire1) -- (2Wire3);
\draw (2Wire0) -- (1Wire0);
\draw (2Wire0) -- (1Wire1);
\draw (2Wire0) -- (3Wire0);
\draw (2Wire1) -- (1Wire0);
\draw (2Wire1) -- (1Wire1);
\draw (2Wire1) -- (3Wire0);
\draw (2Wire2) -- (1Wire0);
\draw (2Wire2) -- (1Wire1);
\draw (2Wire2) -- (3Wire1);
\draw (2Wire3) -- (1Wire0);
\draw (2Wire3) -- (1Wire1);
\draw (2Wire3) -- (3Wire1);
\draw (3Wire0) -- (2Wire0);
\draw (3Wire0) -- (2Wire1);
\draw (3Wire0) -- (4Wire0);
\draw (3Wire0) -- (4Wire1);
\draw (3Wire0) -- (4Wire2);
\draw (3Wire0) -- (4Wire3);
\draw (3Wire1) -- (2Wire2);
\draw (3Wire1) -- (2Wire3);
\draw (3Wire1) -- (4Wire0);
\draw (3Wire1) -- (4Wire1);
\draw (3Wire1) -- (4Wire2);
\draw (3Wire1) -- (4Wire3);
\draw (4Wire0) -- (3Wire0);
\draw (4Wire0) -- (3Wire1);
\draw (4Wire0) -- (5Wire0);
\draw (4Wire1) -- (3Wire0);
\draw (4Wire1) -- (3Wire1);
\draw (4Wire1) -- (5Wire0);
\draw (4Wire2) -- (3Wire0);
\draw (4Wire2) -- (3Wire1);
\draw (4Wire2) -- (5Wire1);
\draw (4Wire3) -- (3Wire0);
\draw (4Wire3) -- (3Wire1);
\draw (4Wire3) -- (5Wire1);
\draw (5Wire0) -- (4Wire0);
\draw (5Wire0) -- (4Wire1);
\draw (5Wire0) -- (6Wire0);
\draw (5Wire0) -- (6Wire1);
\draw (5Wire0) -- (6Wire2);
\draw (5Wire0) -- (6Wire3);
\draw (5Wire1) -- (4Wire2);
\draw (5Wire1) -- (4Wire3);
\draw (5Wire1) -- (6Wire0);
\draw (5Wire1) -- (6Wire1);
\draw (5Wire1) -- (6Wire2);
\draw (5Wire1) -- (6Wire3);
\draw (6Wire0) -- (5Wire0);
\draw (6Wire0) -- (5Wire1);
\draw (6Wire0) -- (7Wire0);
\draw (6Wire1) -- (5Wire0);
\draw (6Wire1) -- (5Wire1);
\draw (6Wire1) -- (7Wire0);
\draw (6Wire2) -- (5Wire0);
\draw (6Wire2) -- (5Wire1);
\draw (6Wire2) -- (7Wire1);
\draw (6Wire3) -- (5Wire0);
\draw (6Wire3) -- (5Wire1);
\draw (6Wire3) -- (7Wire1);
\draw (7Wire0) -- (6Wire0);
\draw (7Wire0) -- (6Wire1);
\draw (7Wire0) -- (8Wire0);
\draw (7Wire0) -- (8Wire1);
\draw (7Wire0) -- (8Wire2);
\draw (7Wire0) -- (8Wire3);
\draw (7Wire1) -- (6Wire2);
\draw (7Wire1) -- (6Wire3);
\draw (7Wire1) -- (8Wire0);
\draw (7Wire1) -- (8Wire1);
\draw (7Wire1) -- (8Wire2);
\draw (7Wire1) -- (8Wire3);
\draw (8Wire0) -- (7Wire0);
\draw (8Wire0) -- (7Wire1);
\draw (8Wire1) -- (7Wire0);
\draw (8Wire1) -- (7Wire1);
\draw (8Wire2) -- (7Wire0);
\draw (8Wire2) -- (7Wire1);
\draw (8Wire3) -- (7Wire0);
\draw (8Wire3) -- (7Wire1);
\end{tikzpicture}

            }
            \end{center}
            \caption{ Four connected wires used to transmit the signal.}\label{wirepicture}
            %\end{figure}

  \end{minipage}
  \hspace{0.1\textwidth}
  \begin{minipage}{0.45\textwidth}
    
            %\begin{figure}
            \begin{center}
            \scalebox{0.5}{
            
\begin{tikzpicture}
\node (Splitteranno1) at (-1.3, 4.2) {  \LARGE $I_1$};
\node[scale=5,color=black!30!white] (Splitteranno1) at (-0.4, 4.2) {  \LARGE \textbraceleft };
\node (Splitteranno2) at (3.48, 6.23) {  \LARGE $O_1$};
\node[scale=5,color=black!30!white] (Splitteranno2) at (2.4299999999999997, 6.23) {  \LARGE \textbraceright };
\node (Splitteranno3) at (3.48, 2.23) {  \LARGE $O_2$};
\node[scale=5,color=black!30!white] (Splitteranno3) at (2.4299999999999997, 2.23) {  \LARGE \textbraceright };
\node[draw,circle,fill=white] (0Splitterupwiresp0) at (2, 5.0) {};
\node[draw,circle,fill=white] (0Splitterupwiresp1) at (2, 5.8) {};
\node[draw,circle,fill=white] (0Splitterupwiresp2) at (2, 6.6) {};
\node[draw,circle,fill=white] (0Splitterupwiresp3) at (2, 7.4) {};
\node[draw,circle,fill=white] (0Splitterdownwieresp0) at (2, 1.0) {};
\node[draw,circle,fill=white] (0Splitterdownwieresp1) at (2, 1.8) {};
\node[draw,circle,fill=white] (0Splitterdownwieresp2) at (2, 2.6) {};
\node[draw,circle,fill=white] (0Splitterdownwieresp3) at (2, 3.4000000000000004) {};
\node[draw,circle,fill=white] (0Splitterw0) at (0, 3.0) {};
\node[draw,circle,fill=white] (0Splitterw1) at (0, 3.8) {};
\node[draw,circle,fill=white] (0Splitterw2) at (0, 4.6) {};
\node[draw,circle,fill=white] (0Splitterw3) at (0, 5.4) {};
\node[draw,circle,fill=white] (1Splitterw0) at (1, 3.8) {};
\node[draw,circle,fill=white] (1Splitterw1) at (1, 4.6) {};
\draw (0Splitterupwiresp0) -- (0Splitterupwiresp0);
\draw (0Splitterupwiresp0) -- (0Splitterupwiresp0);
\draw (0Splitterupwiresp0) -- (1Splitterw0);
\draw (0Splitterupwiresp0) -- (1Splitterw1);
\draw (0Splitterupwiresp1) -- (0Splitterupwiresp1);
\draw (0Splitterupwiresp1) -- (0Splitterupwiresp1);
\draw (0Splitterupwiresp1) -- (1Splitterw0);
\draw (0Splitterupwiresp1) -- (1Splitterw1);
\draw (0Splitterupwiresp2) -- (0Splitterupwiresp2);
\draw (0Splitterupwiresp2) -- (0Splitterupwiresp2);
\draw (0Splitterupwiresp2) -- (1Splitterw0);
\draw (0Splitterupwiresp2) -- (1Splitterw1);
\draw (0Splitterupwiresp3) -- (0Splitterupwiresp3);
\draw (0Splitterupwiresp3) -- (0Splitterupwiresp3);
\draw (0Splitterupwiresp3) -- (1Splitterw0);
\draw (0Splitterupwiresp3) -- (1Splitterw1);
\draw (0Splitterdownwieresp0) -- (0Splitterdownwieresp0);
\draw (0Splitterdownwieresp0) -- (0Splitterdownwieresp0);
\draw (0Splitterdownwieresp0) -- (1Splitterw0);
\draw (0Splitterdownwieresp0) -- (1Splitterw1);
\draw (0Splitterdownwieresp1) -- (0Splitterdownwieresp1);
\draw (0Splitterdownwieresp1) -- (0Splitterdownwieresp1);
\draw (0Splitterdownwieresp1) -- (1Splitterw0);
\draw (0Splitterdownwieresp1) -- (1Splitterw1);
\draw (0Splitterdownwieresp2) -- (0Splitterdownwieresp2);
\draw (0Splitterdownwieresp2) -- (0Splitterdownwieresp2);
\draw (0Splitterdownwieresp2) -- (1Splitterw0);
\draw (0Splitterdownwieresp2) -- (1Splitterw1);
\draw (0Splitterdownwieresp3) -- (0Splitterdownwieresp3);
\draw (0Splitterdownwieresp3) -- (0Splitterdownwieresp3);
\draw (0Splitterdownwieresp3) -- (1Splitterw0);
\draw (0Splitterdownwieresp3) -- (1Splitterw1);
\draw (0Splitterw0) -- (0Splitterw0);
\draw (0Splitterw0) -- (0Splitterw0);
\draw (0Splitterw0) -- (1Splitterw0);
\draw (0Splitterw1) -- (0Splitterw1);
\draw (0Splitterw1) -- (0Splitterw1);
\draw (0Splitterw1) -- (1Splitterw0);
\draw (0Splitterw2) -- (0Splitterw2);
\draw (0Splitterw2) -- (0Splitterw2);
\draw (0Splitterw2) -- (1Splitterw1);
\draw (0Splitterw3) -- (0Splitterw3);
\draw (0Splitterw3) -- (0Splitterw3);
\draw (0Splitterw3) -- (1Splitterw1);
\draw (1Splitterw0) -- (0Splitterw0);
\draw (1Splitterw0) -- (0Splitterw1);
\draw (1Splitterw0) -- (0Splitterupwiresp0);
\draw (1Splitterw0) -- (0Splitterupwiresp1);
\draw (1Splitterw0) -- (0Splitterupwiresp2);
\draw (1Splitterw0) -- (0Splitterupwiresp3);
\draw (1Splitterw0) -- (0Splitterdownwieresp0);
\draw (1Splitterw0) -- (0Splitterdownwieresp1);
\draw (1Splitterw0) -- (0Splitterdownwieresp2);
\draw (1Splitterw0) -- (0Splitterdownwieresp3);
\draw (1Splitterw1) -- (0Splitterw2);
\draw (1Splitterw1) -- (0Splitterw3);
\draw (1Splitterw1) -- (0Splitterupwiresp0);
\draw (1Splitterw1) -- (0Splitterupwiresp1);
\draw (1Splitterw1) -- (0Splitterupwiresp2);
\draw (1Splitterw1) -- (0Splitterupwiresp3);
\draw (1Splitterw1) -- (0Splitterdownwieresp0);
\draw (1Splitterw1) -- (0Splitterdownwieresp1);
\draw (1Splitterw1) -- (0Splitterdownwieresp2);
\draw (1Splitterw1) -- (0Splitterdownwieresp3);
\end{tikzpicture}

            }
            \end{center}
            \caption{ Splitter, a gadget used to divide the signal.}\label{splitterpicture}
            %\end{figure}

  \end{minipage}
\end{figure}

\begin{figure}
  \begin{minipage}{0.45\textwidth}
    
            %\begin{figure}
            \begin{center}
            \scalebox{0.5}{
            
\begin{tikzpicture}
\node (Oranno1) at (-1.3, 5.2) {  \LARGE $I_1$};
\node[scale=5,color=black!30!white] (Oranno1) at (-0.4, 5.2) {  \LARGE \textbraceleft };
\node (Oranno2) at (-1.3, 1.2) {  \LARGE $I_2$};
\node[scale=5,color=black!30!white] (Oranno2) at (-0.4, 1.2) {  \LARGE \textbraceleft };
\node (Oranno3) at (5.45, 3.2) {  \LARGE $O_1$};
\node[scale=5,color=black!30!white] (Oranno3) at (4.4, 3.2) {  \LARGE \textbraceright };
\node[draw,circle,fill=white] (0Orcror0) at (0, 0.0) {};
\node[draw,circle,fill=white] (0Orcror1) at (0, 0.8) {};
\node[draw,circle,fill=white] (0Orcror2) at (0, 1.6) {};
\node[draw,circle,fill=white] (0Orcror3) at (0, 2.4000000000000004) {};
\node[draw,circle,fill=white] (1Orcror0) at (1, 0.8) {};
\node[draw,circle,fill=white] (1Orcror1) at (1, 1.6) {};
\node[draw,circle,fill=white] (2Orcror0) at (2, 0.0) {};
\node[draw,circle,fill=white] (2Orcror1) at (2, 0.8) {};
\node[draw,circle,fill=white] (2Orcror2) at (2, 1.6) {};
\node[draw,circle,fill=white] (2Orcror3) at (2, 2.4000000000000004) {};
\node[draw,circle,fill=white] (0Orcrora0) at (0, 4.0) {};
\node[draw,circle,fill=white] (0Orcrora1) at (0, 4.8) {};
\node[draw,circle,fill=white] (0Orcrora2) at (0, 5.6) {};
\node[draw,circle,fill=white] (0Orcrora3) at (0, 6.4) {};
\node[draw,circle,fill=white] (1Orcrora0) at (1, 4.8) {};
\node[draw,circle,fill=white] (1Orcrora1) at (1, 5.6) {};
\node[draw,circle,fill=white] (2Orcrora0) at (2, 4.0) {};
\node[draw,circle,fill=white] (2Orcrora1) at (2, 4.8) {};
\node[draw,circle,fill=white] (2Orcrora2) at (2, 5.6) {};
\node[draw,circle,fill=white] (2Orcrora3) at (2, 6.4) {};
\node[draw,circle,fill=white] (0ao0) at (3, 2.8) {};
\node[draw,circle,fill=white] (0ao1) at (3, 3.6) {};
\node[draw,circle,fill=white] (1ao0) at (4, 2.0) {};
\node[draw,circle,fill=white] (1ao1) at (4, 2.8) {};
\node[draw,circle,fill=white] (1ao2) at (4, 3.6) {};
\node[draw,circle,fill=white] (1ao3) at (4, 4.4) {};
\draw (0Orcror0) -- (0Orcror0);
\draw (0Orcror0) -- (0Orcror0);
\draw (0Orcror0) -- (1Orcror0);
\draw (0Orcror1) -- (0Orcror1);
\draw (0Orcror1) -- (0Orcror1);
\draw (0Orcror1) -- (1Orcror0);
\draw (0Orcror2) -- (0Orcror2);
\draw (0Orcror2) -- (0Orcror2);
\draw (0Orcror2) -- (1Orcror1);
\draw (0Orcror3) -- (0Orcror3);
\draw (0Orcror3) -- (0Orcror3);
\draw (0Orcror3) -- (1Orcror1);
\draw (1Orcror0) -- (0Orcror0);
\draw (1Orcror0) -- (0Orcror1);
\draw (1Orcror0) -- (2Orcror0);
\draw (1Orcror0) -- (2Orcror1);
\draw (1Orcror0) -- (2Orcror2);
\draw (1Orcror0) -- (2Orcror3);
\draw (1Orcror1) -- (0Orcror2);
\draw (1Orcror1) -- (0Orcror3);
\draw (1Orcror1) -- (2Orcror0);
\draw (1Orcror1) -- (2Orcror1);
\draw (1Orcror1) -- (2Orcror2);
\draw (1Orcror1) -- (2Orcror3);
\draw (2Orcror0) -- (1Orcror0);
\draw (2Orcror0) -- (1Orcror1);
\draw (2Orcror0) -- (0ao0);
\draw (2Orcror1) -- (1Orcror0);
\draw (2Orcror1) -- (1Orcror1);
\draw (2Orcror1) -- (0ao0);
\draw (2Orcror2) -- (1Orcror0);
\draw (2Orcror2) -- (1Orcror1);
\draw (2Orcror2) -- (0ao1);
\draw (2Orcror3) -- (1Orcror0);
\draw (2Orcror3) -- (1Orcror1);
\draw (2Orcror3) -- (0ao1);
\draw (0Orcrora0) -- (0Orcrora0);
\draw (0Orcrora0) -- (0Orcrora0);
\draw (0Orcrora0) -- (1Orcrora0);
\draw (0Orcrora1) -- (0Orcrora1);
\draw (0Orcrora1) -- (0Orcrora1);
\draw (0Orcrora1) -- (1Orcrora0);
\draw (0Orcrora2) -- (0Orcrora2);
\draw (0Orcrora2) -- (0Orcrora2);
\draw (0Orcrora2) -- (1Orcrora1);
\draw (0Orcrora3) -- (0Orcrora3);
\draw (0Orcrora3) -- (0Orcrora3);
\draw (0Orcrora3) -- (1Orcrora1);
\draw (1Orcrora0) -- (0Orcrora0);
\draw (1Orcrora0) -- (0Orcrora1);
\draw (1Orcrora0) -- (2Orcrora0);
\draw (1Orcrora0) -- (2Orcrora1);
\draw (1Orcrora0) -- (2Orcrora2);
\draw (1Orcrora0) -- (2Orcrora3);
\draw (1Orcrora1) -- (0Orcrora2);
\draw (1Orcrora1) -- (0Orcrora3);
\draw (1Orcrora1) -- (2Orcrora0);
\draw (1Orcrora1) -- (2Orcrora1);
\draw (1Orcrora1) -- (2Orcrora2);
\draw (1Orcrora1) -- (2Orcrora3);
\draw (2Orcrora0) -- (1Orcrora0);
\draw (2Orcrora0) -- (1Orcrora1);
\draw (2Orcrora0) -- (0ao0);
\draw (2Orcrora1) -- (1Orcrora0);
\draw (2Orcrora1) -- (1Orcrora1);
\draw (2Orcrora1) -- (0ao0);
\draw (2Orcrora2) -- (1Orcrora0);
\draw (2Orcrora2) -- (1Orcrora1);
\draw (2Orcrora2) -- (0ao1);
\draw (2Orcrora3) -- (1Orcrora0);
\draw (2Orcrora3) -- (1Orcrora1);
\draw (2Orcrora3) -- (0ao1);
\draw (0ao0) -- (0ao0);
\draw (0ao0) -- (0ao0);
\draw (0ao0) -- (1ao0);
\draw (0ao0) -- (1ao1);
\draw (0ao0) -- (1ao2);
\draw (0ao0) -- (1ao3);
\draw (0ao0) -- (2Orcror0);
\draw (0ao0) -- (2Orcror1);
\draw (0ao0) -- (2Orcrora0);
\draw (0ao0) -- (2Orcrora1);
\draw (0ao1) -- (0ao1);
\draw (0ao1) -- (0ao1);
\draw (0ao1) -- (1ao0);
\draw (0ao1) -- (1ao1);
\draw (0ao1) -- (1ao2);
\draw (0ao1) -- (1ao3);
\draw (0ao1) -- (2Orcror2);
\draw (0ao1) -- (2Orcror3);
\draw (0ao1) -- (2Orcrora2);
\draw (0ao1) -- (2Orcrora3);
\draw (1ao0) -- (0ao0);
\draw (1ao0) -- (0ao1);
\draw (1ao1) -- (0ao0);
\draw (1ao1) -- (0ao1);
\draw (1ao2) -- (0ao0);
\draw (1ao2) -- (0ao1);
\draw (1ao3) -- (0ao0);
\draw (1ao3) -- (0ao1);
\end{tikzpicture}

            }
            \end{center}
            \caption{ Gate computing OR.}\label{orpicture}
            %\end{figure}
  \end{minipage}
  \hspace{0.1\textwidth}
  \begin{minipage}{0.45\textwidth}
    
            %\begin{figure}
            \begin{center}
            \scalebox{0.5}{
            
\begin{tikzpicture}
\node (Notanno1) at (-1.3, 5.7) {  \LARGE $I_1$};
\node[scale=5,color=black!30!white] (Notanno1) at (-0.4, 5.7) {  \LARGE \textbraceleft };
\node (Notanno2) at (-1.3, 0.2) {  \LARGE $I_2$};
\node[scale=5,color=black!30!white] (Notanno2) at (-0.4, 0.2) {  \LARGE \textbraceleft };
\node (Notanno3) at (5.45, 5.7) {  \LARGE $O_1$};
\node[scale=5,color=black!30!white] (Notanno3) at (4.4, 5.7) {  \LARGE \textbraceright };
\node[draw,circle,fill=white] (0NotwireInNotLow0) at (0, -1.0) {};
\node[draw,circle,fill=white] (0NotwireInNotLow1) at (0, -0.19999999999999996) {};
\node[draw,circle,fill=white] (0NotwireInNotLow2) at (0, 0.6000000000000001) {};
\node[draw,circle,fill=white] (0NotwireInNotLow3) at (0, 1.4000000000000004) {};
\node[draw,circle,fill=white] (1NotwireInNotLow0) at (1, -0.19999999999999996) {};
\node[draw,circle,fill=white] (1NotwireInNotLow1) at (1, 0.6000000000000001) {};
\node[draw,circle,fill=white] (2NotwireInNotLow0) at (2, -1.0) {};
\node[draw,circle,fill=white] (2NotwireInNotLow1) at (2, -0.19999999999999996) {};
\node[draw,circle,fill=white] (2NotwireInNotLow2) at (2, 0.6000000000000001) {};
\node[draw,circle,fill=white] (2NotwireInNotLow3) at (2, 1.4000000000000004) {};
\node[draw,circle,fill=white] (3NotwireInNotLow0) at (3, -0.19999999999999996) {};
\node[draw,circle,fill=white] (3NotwireInNotLow1) at (3, 0.6000000000000001) {};
\node[draw,circle,fill=white] (4NotwireInNotLow0) at (4, -1.0) {};
\node[draw,circle,fill=white] (4NotwireInNotLow1) at (4, -0.19999999999999996) {};
\node[draw,circle,fill=white] (4NotwireInNotLow2) at (4, 0.6000000000000001) {};
\node[draw,circle,fill=white] (4NotwireInNotLow3) at (4, 1.4000000000000004) {};
\node[draw,circle,fill=white] (0NotwireInNot0) at (0, 4.5) {};
\node[draw,circle,fill=white] (0NotwireInNot1) at (0, 5.3) {};
\node[draw,circle,fill=white] (0NotwireInNot2) at (0, 6.1) {};
\node[draw,circle,fill=white] (0NotwireInNot3) at (0, 6.9) {};
\node[draw,circle,fill=white] (1NotwireInNot0) at (1, 5.3) {};
\node[draw,circle,fill=white] (1NotwireInNot1) at (1, 6.1) {};
\node[draw,circle,fill=white] (2NotwireInNot0) at (2, 4.5) {};
\node[draw,circle,fill=white] (2NotwireInNot1) at (2, 5.3) {};
\node[draw,circle,fill=white] (2NotwireInNot2) at (2, 6.1) {};
\node[draw,circle,fill=white] (2NotwireInNot3) at (2, 6.9) {};
\node[draw,circle,fill=white] (3NotwireInNot0) at (3, 5.3) {};
\node[draw,circle,fill=white] (3NotwireInNot1) at (3, 6.1) {};
\node[draw,circle,fill=white] (0NotendWireNot0) at (4, 4.5) {};
\node[draw,circle,fill=white] (0NotendWireNot1) at (4, 5.3) {};
\node[draw,circle,fill=white] (0NotendWireNot2) at (4, 6.1) {};
\node[draw,circle,fill=white] (0NotendWireNot3) at (4, 6.9) {};
\draw (0NotwireInNotLow0) -- (0NotwireInNotLow0);
\draw (0NotwireInNotLow0) -- (0NotwireInNotLow0);
\draw (0NotwireInNotLow0) -- (1NotwireInNotLow0);
\draw (0NotwireInNotLow1) -- (0NotwireInNotLow1);
\draw (0NotwireInNotLow1) -- (0NotwireInNotLow1);
\draw (0NotwireInNotLow1) -- (1NotwireInNotLow0);
\draw (0NotwireInNotLow2) -- (0NotwireInNotLow2);
\draw (0NotwireInNotLow2) -- (0NotwireInNotLow2);
\draw (0NotwireInNotLow2) -- (1NotwireInNotLow1);
\draw (0NotwireInNotLow2) -- (3NotwireInNot0);
\draw (0NotwireInNotLow3) -- (0NotwireInNotLow3);
\draw (0NotwireInNotLow3) -- (0NotwireInNotLow3);
\draw (0NotwireInNotLow3) -- (1NotwireInNotLow1);
\draw (0NotwireInNotLow3) -- (3NotwireInNot0);
\draw (1NotwireInNotLow0) -- (0NotwireInNotLow0);
\draw (1NotwireInNotLow0) -- (0NotwireInNotLow1);
\draw (1NotwireInNotLow0) -- (2NotwireInNotLow0);
\draw (1NotwireInNotLow0) -- (2NotwireInNotLow1);
\draw (1NotwireInNotLow0) -- (2NotwireInNotLow2);
\draw (1NotwireInNotLow0) -- (2NotwireInNotLow3);
\draw (1NotwireInNotLow1) -- (0NotwireInNotLow2);
\draw (1NotwireInNotLow1) -- (0NotwireInNotLow3);
\draw (1NotwireInNotLow1) -- (2NotwireInNotLow0);
\draw (1NotwireInNotLow1) -- (2NotwireInNotLow1);
\draw (1NotwireInNotLow1) -- (2NotwireInNotLow2);
\draw (1NotwireInNotLow1) -- (2NotwireInNotLow3);
\draw (2NotwireInNotLow0) -- (1NotwireInNotLow0);
\draw (2NotwireInNotLow0) -- (1NotwireInNotLow1);
\draw (2NotwireInNotLow0) -- (3NotwireInNotLow0);
\draw (2NotwireInNotLow1) -- (1NotwireInNotLow0);
\draw (2NotwireInNotLow1) -- (1NotwireInNotLow1);
\draw (2NotwireInNotLow1) -- (3NotwireInNotLow0);
\draw (2NotwireInNotLow2) -- (1NotwireInNotLow0);
\draw (2NotwireInNotLow2) -- (1NotwireInNotLow1);
\draw (2NotwireInNotLow2) -- (3NotwireInNotLow1);
\draw (2NotwireInNotLow3) -- (1NotwireInNotLow0);
\draw (2NotwireInNotLow3) -- (1NotwireInNotLow1);
\draw (2NotwireInNotLow3) -- (3NotwireInNotLow1);
\draw (3NotwireInNotLow0) -- (2NotwireInNotLow0);
\draw (3NotwireInNotLow0) -- (2NotwireInNotLow1);
\draw (3NotwireInNotLow0) -- (4NotwireInNotLow0);
\draw (3NotwireInNotLow0) -- (4NotwireInNotLow1);
\draw (3NotwireInNotLow0) -- (4NotwireInNotLow2);
\draw (3NotwireInNotLow0) -- (4NotwireInNotLow3);
\draw (3NotwireInNotLow1) -- (2NotwireInNotLow2);
\draw (3NotwireInNotLow1) -- (2NotwireInNotLow3);
\draw (3NotwireInNotLow1) -- (4NotwireInNotLow0);
\draw (3NotwireInNotLow1) -- (4NotwireInNotLow1);
\draw (3NotwireInNotLow1) -- (4NotwireInNotLow2);
\draw (3NotwireInNotLow1) -- (4NotwireInNotLow3);
\draw (4NotwireInNotLow0) -- (3NotwireInNotLow0);
\draw (4NotwireInNotLow0) -- (3NotwireInNotLow1);
\draw (4NotwireInNotLow1) -- (3NotwireInNotLow0);
\draw (4NotwireInNotLow1) -- (3NotwireInNotLow1);
\draw (4NotwireInNotLow2) -- (3NotwireInNotLow0);
\draw (4NotwireInNotLow2) -- (3NotwireInNotLow1);
\draw (4NotwireInNotLow2) -- (3NotwireInNot1);
\draw (4NotwireInNotLow3) -- (3NotwireInNotLow0);
\draw (4NotwireInNotLow3) -- (3NotwireInNotLow1);
\draw (4NotwireInNotLow3) -- (3NotwireInNot1);
\draw (0NotwireInNot0) -- (0NotwireInNot0);
\draw (0NotwireInNot0) -- (0NotwireInNot0);
\draw (0NotwireInNot0) -- (1NotwireInNot0);
\draw (0NotwireInNot1) -- (0NotwireInNot1);
\draw (0NotwireInNot1) -- (0NotwireInNot1);
\draw (0NotwireInNot1) -- (1NotwireInNot0);
\draw (0NotwireInNot2) -- (0NotwireInNot2);
\draw (0NotwireInNot2) -- (0NotwireInNot2);
\draw (0NotwireInNot2) -- (1NotwireInNot1);
\draw (0NotwireInNot3) -- (0NotwireInNot3);
\draw (0NotwireInNot3) -- (0NotwireInNot3);
\draw (0NotwireInNot3) -- (1NotwireInNot1);
\draw (1NotwireInNot0) -- (0NotwireInNot0);
\draw (1NotwireInNot0) -- (0NotwireInNot1);
\draw (1NotwireInNot0) -- (2NotwireInNot0);
\draw (1NotwireInNot0) -- (2NotwireInNot1);
\draw (1NotwireInNot0) -- (2NotwireInNot2);
\draw (1NotwireInNot0) -- (2NotwireInNot3);
\draw (1NotwireInNot1) -- (0NotwireInNot2);
\draw (1NotwireInNot1) -- (0NotwireInNot3);
\draw (1NotwireInNot1) -- (2NotwireInNot0);
\draw (1NotwireInNot1) -- (2NotwireInNot1);
\draw (1NotwireInNot1) -- (2NotwireInNot2);
\draw (1NotwireInNot1) -- (2NotwireInNot3);
\draw (2NotwireInNot0) -- (1NotwireInNot0);
\draw (2NotwireInNot0) -- (1NotwireInNot1);
\draw (2NotwireInNot0) -- (3NotwireInNot0);
\draw (2NotwireInNot1) -- (1NotwireInNot0);
\draw (2NotwireInNot1) -- (1NotwireInNot1);
\draw (2NotwireInNot1) -- (3NotwireInNot0);
\draw (2NotwireInNot2) -- (1NotwireInNot0);
\draw (2NotwireInNot2) -- (1NotwireInNot1);
\draw (2NotwireInNot2) -- (3NotwireInNot1);
\draw (2NotwireInNot3) -- (1NotwireInNot0);
\draw (2NotwireInNot3) -- (1NotwireInNot1);
\draw (2NotwireInNot3) -- (3NotwireInNot1);
\draw (3NotwireInNot0) -- (2NotwireInNot0);
\draw (3NotwireInNot0) -- (2NotwireInNot1);
\draw (3NotwireInNot0) -- (0NotwireInNotLow2);
\draw (3NotwireInNot0) -- (0NotwireInNotLow3);
\draw (3NotwireInNot0) -- (0NotendWireNot0);
\draw (3NotwireInNot0) -- (0NotendWireNot1);
\draw (3NotwireInNot0) -- (0NotendWireNot2);
\draw (3NotwireInNot0) -- (0NotendWireNot3);
\draw (3NotwireInNot1) -- (2NotwireInNot2);
\draw (3NotwireInNot1) -- (2NotwireInNot3);
\draw (3NotwireInNot1) -- (4NotwireInNotLow2);
\draw (3NotwireInNot1) -- (4NotwireInNotLow3);
\draw (3NotwireInNot1) -- (0NotendWireNot0);
\draw (3NotwireInNot1) -- (0NotendWireNot1);
\draw (3NotwireInNot1) -- (0NotendWireNot2);
\draw (3NotwireInNot1) -- (0NotendWireNot3);
\draw (0NotendWireNot0) -- (0NotendWireNot0);
\draw (0NotendWireNot0) -- (0NotendWireNot0);
\draw (0NotendWireNot0) -- (3NotwireInNot0);
\draw (0NotendWireNot0) -- (3NotwireInNot1);
\draw (0NotendWireNot1) -- (0NotendWireNot1);
\draw (0NotendWireNot1) -- (0NotendWireNot1);
\draw (0NotendWireNot1) -- (3NotwireInNot0);
\draw (0NotendWireNot1) -- (3NotwireInNot1);
\draw (0NotendWireNot2) -- (0NotendWireNot2);
\draw (0NotendWireNot2) -- (0NotendWireNot2);
\draw (0NotendWireNot2) -- (3NotwireInNot0);
\draw (0NotendWireNot2) -- (3NotwireInNot1);
\draw (0NotendWireNot3) -- (0NotendWireNot3);
\draw (0NotendWireNot3) -- (0NotendWireNot3);
\draw (0NotendWireNot3) -- (3NotwireInNot0);
\draw (0NotendWireNot3) -- (3NotwireInNot1);
\end{tikzpicture}

            }
            \end{center}
            \caption{ Gadget computing NOT. It uses clock signal in $I_1$.}\label{notpicture}
            %\end{figure}
  \end{minipage}
\end{figure}

\subsection{proof of Lemma~\ref{lemma:everyFunction}}\label{appendix:proofEveryFunction}

\begin{proof}[Proof of Lemma~\ref{lemma:everyFunction}]
  Let $c \in \mathbb{N}$,
  and let $f: \{0,1\}^c \rightarrow \{0,1\}$
  be a function mapping any tuple whose first variable is $0$ to $0$.
  We use an induction on the dimension $c$ of the domain.
  If $c=1$,
  since $f$ satisfies $f(0) = 0$ by supposition,
  either $f$ is the identity, which is realised by a wire,
  or $f$ is the zero function,
  which is realised by simply disconnecting
  the input and output.

  Now let us suppose that $c>1$,
  and that we can realise any function
  $f:\{0,1\}^{c-1} \rightarrow \{0,1\}$ that satisfies the condition.
  In particular,
  there exist two gadgets $g_0$ and $g_1$ computing
  \[
  \begin{array}{lclc}
  f_0: & \{0,1\}^{c-1} & \rightarrow & \{0,1\},\\
  & (x_1,x_2,\ldots,x_{c-1}) & \mapsto & f(x_1,x_2,\dots,x_{c-1},0),\\
  f_1: & \{0,1\}^{c-1} & \rightarrow & \{0,1\},\\
  & (x_1,x_2,\ldots,x_{c-1}) & \mapsto & f(x_1,x_2,\dots,x_{c-1},1).
  \end{array}
  \]
  Having these values, the computation is straightforward
  as the value $f(x_1,x_2,\ldots,x_c)$ is given by the formula 
  \[
  (f_0(x_1,x_2,\ldots,x_{c-1}) \wedge \neg x_c) \vee
  (f_1(x_1,x_2,\ldots,x_{c-1}) \wedge x_c).
  \]
  We construct a gadget $g$ computing $f$ as follows.
  \subparagraph{Computing the atoms.}
  We apply two splitters to the input $I_1$ of $g$
  to obtain three copies of the received input $x_1$.
  Similarly, for every $2 \leq i \leq c$,
  we duplicate the signal $x_i$ received by
  the input $I_i$ of $g$ by applying a splitter.
  For every $1 \leq i \leq c-1$,
  we send the first copy of $x_i$ to the input $I_i$ of $g_0$,
  and the second to the input $I_i$ of $g_1$.
  We already have the signal $x_c$.
  To compute $\neg x_c$,
  we use a NOT gate $g'$,
  which gets as input $I_1$ the third copy of $x_1$,
  and as input $I_2$ the second copy of $x_c$.
  Note that if $x_1 = 0$,
  the result of $g'$ will be $0$ instead of $\neg x_c$;
  however, this is not a problem, as we know 
  that whenever $x_1 = 0$ the function has to be zero.
  \subparagraph{Combining the atoms.}
  We link the outputs of $g_0$ and $g'$ to an AND gate,
  the first copy of $x_c$ and the output of $g_1$ to another AND gate,
  and apply and OR gate to the results.
  \subparagraph{Size of the gadget.}
  Now, we show that the computation is fast and requires a reasonable number of basic gadgets.
  Each input is split in two or three,
  for this $c+1$ basic gadgets and $10$ steps are enough.
  To compute $x_c$ and $\neg x_c$ we use a single basic gadget and at most $10$ steps.
  By the induction hypothesis, both $g_0$ and $g_1$
  can be constructed using $2^{c+10}$ basic gadgets,
  and have an evaluation time of at most $10c$.
  Connecting $f_1,f_0,x_c$ and $\neg x_c$ needs only three basic gadgets and at most $10$ steps.
  This means that $g$ is composed of
  $\calO(2^c)$ basic gadgets,
  and its evaluation time is $\calO(c)$.
\end{proof}

\subsection{Regularization of the graph}\label{appendix:regularization}
\begin{theorem}
  The construction of Theorem~\ref{theorem:overpopulation} also holds for regular graphs with degree $10$.
\end{theorem}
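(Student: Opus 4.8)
The plan is to start from the graph $G$ and initial configuration $c_0$ produced in the proof of Lemma~\ref{lemma:simulating_t_m}, and to pad the degree of every vertex up to exactly $10$ by attaching fresh auxiliary vertices that provably stay dead during the whole evolution; since a dead neighbour contributes nothing to a vertex's count of live neighbours under $\overpopulation{2}{1}$, the behaviour of the original vertices --- and hence the reduction of Theorem~\ref{theorem:overpopulation} --- is untouched. First I would check, by direct inspection of the basic gadgets (Figures~\ref{livewire}, \ref{supicture}, \ref{wirepicture} and \ref{notpicture}), that every vertex of $G$ already has degree at most $10$: the bound $10$ is attained inside the storage unit, and connecting an output of one gadget to an input of another is realised by vertex identification, which does not raise degrees. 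So no vertex has to lose edges, only gain them.

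Next I would build the padding. For each vertex $v$ of $G$ set $k_v = 10 - \deg_G(v) \ge 0$, and let $P = \sum_v k_v = 10\,|V| - 2\,|E|$, which is \emph{even} by the handshake identity. I would construct an auxiliary graph $H$ on a fresh vertex set $A$ containing $P$ distinguished \emph{ports}, in which every port has degree $9$ and every non-port vertex has degree $10$; since $P$ is even and the degrees are bounded, such a graph exists and can be taken of size $\calO(P) = \calO(|V|)$ (realise the degree sequence greedily --- e.g. a $9$-regular circulant on $P$ vertices when $P \ge 10$, and $K_{11}$ with a small matching removed otherwise, to absorb small cases). The regularized graph $G'$ is the disjoint union of $G$ and $H$ together with, for each $v$, one edge joining $v$ to $k_v$ distinct unused ports, each port being used exactly once; then $G'$ is $10$-regular with $\calO(C)$ vertices, hence still of polynomial size, and its initial configuration $c_0'$ extends $c_0$ by declaring all of $A$ dead.

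The core of the argument is then that every vertex of $A$ stays dead forever, which I would prove by induction on time. The key structural fact is that each vertex of $A$ has \emph{at most one} neighbour outside $A$ (a port has exactly one, its construction vertex; an internal vertex has none). So if at some step all of $A$ is dead, then every vertex of $A$ has at most one live neighbour at that step, and since $\overpopulation{2}{1}$ revives a dead vertex only when it has at least two live neighbours, all of $A$ is still dead at the next step. Consequently, at every step the number of live neighbours of any original vertex $v$ is the same in $G'$ as in $G$, so the evolutions of $G$ and $G'$ agree on $V$ while the $A$-part remains all dead. In particular $G'$ reaches the all-dead configuration iff $G$ does, i.e. iff $T$ halts, and the live-vertex count of $G'$ equals that of $G$ at every step, so the long-run average is unchanged; thus $\reach$ and $\average$ for $\overpopulation{2}{1}$ remain PSPACE-hard on $10$-regular graphs.

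The only real obstacle I anticipate is exactly this last invariant: one must attach the auxiliary vertices so that no auxiliary vertex ever sees two live neighbours simultaneously, which is what forces the ``at most one construction neighbour per auxiliary vertex'' design and crucially relies on the birth threshold of $\overpopulation{2}{1}$ being $2$ rather than $1$. The graph-theoretic ingredient --- existence of $H$ with the prescribed degree sequence --- is routine once one records the parity identity $P = 10|V| - 2|E|$, and the rest is bookkeeping.
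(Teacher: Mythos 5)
Your proposal is correct and follows essentially the same route as the paper: pad every original vertex up to degree $10$ with fresh auxiliary vertices (the paper's ``leaves'', your ``ports''), note that $10|V|-2|E|$ is even so a $9$-regular graph on them exists, and use the birth threshold $2$ of $\overpopulation{2}{1}$ to argue inductively that the auxiliary vertices stay dead and hence do not perturb the simulation. Your write-up is in fact slightly more careful than the paper's, spelling out the ``at most one neighbour outside the auxiliary set'' invariant and the small-$P$ edge cases explicitly.
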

\begin{proof}
Every vertex in the graph
$G= (V,E)$ presented in the proof of Lemma \ref{lemma:simulating_t_m} has constant
degree, and we can use this fact to make the constructed
graph regular. Amongst our basic gadget, the highest degree
is $10$: both the splitter and the storage unit reach this bound
with two vertices (Figure \ref{splitterpicture} and \ref{supicture}).
Note that, according to
the overpopulation rule $\overpopulation{2}{1}$, any vertex that has at most
one live neighbor will stays dead all the time. Therefore, we
can add leaves (vertices with degree one) so that every vertex
in our construction has degree exactly ten, and we know for
sure that the leaves will never become alive. Finally, we need
to add nine edges to every leaf, so that each leaf also has
degree ten. Note that we added $10|V|-2|E|$ leaves, which is
an even number. Therefore we can take any 9-regular graph
over the leaves, for instance the union of nine pairings.
\end{proof}
\end{document}